\newtheorem{theorem}{Theorem}
\newtheorem{corollary}{Corollary}
\newtheorem{lemma}{Lemma}
\theoremstyle{definition}
\newtheorem{definition}{Definition}
\newtheorem{assumption}{Assumption}
\newtheorem{example}{Example}
\def \R{\mathbb R}
\newcommand{\sset}[1]{\left\{ #1\right\}}
\newcommand{\fwh}[1]{\; \left| \; #1 \right.}
\newcommand{\rev}{\ensuremath{\text{\rm\sc Rev}}}
\newcommand{\union}{\cup} 
\newcommand{\map}{\longrightarrow}
\newcommand{\inters}{\cap}    
\newcommand{\vecc}[1]{\ensuremath{\mathbf{#1}}}
\newcommand{\slice}[3]{\left.#1\right|_{{#2}:#3}}  
\title{Selling Two Goods Optimally\footnote{The research leading to these results has received funding from the European Research Council under the European Union's Seventh Framework Programme (FP7/2007-2013)/ERC grant agreement no.\ 321171.
\newline \indent
A preliminary version of this paper appeared in ICALP 2015~\citep{gk2015_icalp}.}}
\author{Yiannis Giannakopoulos\thanks{Department of Computer Science, University of Oxford. Email: \href{mailto:ygiannak@cs.ox.ac.uk}{\nolinkurl{ygiannak@cs.ox.ac.uk} }} \and Elias Koutsoupias\thanks{Department of Computer Science, University of Oxford. Email: \href{mailto:elias@cs.ox.ac.uk}{\nolinkurl{elias@cs.ox.ac.uk} }}}
\date{November 21, 2015}
\begin{document}
\maketitle

\begin{abstract}
We provide sufficient conditions for revenue maximization in a two-good monopoly where the buyer's values for the items come from independent (but not necessarily identical) distributions over bounded intervals. Under certain distributional assumptions, we give exact, closed-form formulas for the prices and allocation rule of the optimal selling mechanism. As a side result we give the first example of an optimal mechanism in an i.i.d.\ setting over a support of the form $[0,b]$ which is \emph{not} deterministic. Since our framework is based on duality techniques, we were also able to demonstrate how slightly relaxed versions of it can still be used to design mechanisms that have very good approximation ratios with respect to the optimal revenue, through a ``convexification'' process. 
\end{abstract}

\section{Introduction}

The problem of designing auctions that maximize the seller's revenue in settings with many heterogeneous goods has attracted a large amount of interest in the last years, both from the Computer Science as well as the Economics community (see e.g.~\citep{Manelli:2006vn,Pavlov:2011fk,Hart:2012uq,Hart:2012zr,Daskalakis:2012fk,Daskalakis:2013vn,gk2014,Menicucci:2014jl,Daskalakis:2014fk}). Here the seller faces a buyer whose true values for the $m$ items come from a probability distribution over $\R_+^m$ and, based only on this incomplete prior knowledge, he wishes to design a selling mechanism that will maximize his expected revenue. For the purposes of this paper, the prior distribution is a product one, meaning that the item values are independent. The buyer is additive, in the sense that her happiness from receiving any subset of items is the sum of her values of the individual items in that bundle. The buyer is also selfish and completely rational, thus willing to lie about her true values if this is to improve her own happiness. So, the seller should also make sure to give the right incentives to the buyer in order to avoid manipulation of the protocol by misreporting. 

The special case of a single item has been very well understood since the seminal work of~\citet{Myerson:1981aa}. However, when one moves to settings with multiple goods, the problem becomes notoriously difficult and novel approaches are necessary. Despite the significant effort of the researchers in the field, essentially only specialized, partial results are known: there are exact solutions for two items in the case of identical uniform distributions over unit-length intervals~\citep{Pavlov:2011fk,Manelli:2006vn}, exponential over $[0,\infty)$ \citep{Daskalakis:2013vn} or identical Pareto distributions with tail index parameters $\alpha\geq 1/2$ \citep{Hart:2012uq}. For more than two items, optimal results are only known for uniform values over the unit interval~\citep{gk2014}, and due to the difficulty of exact solutions most of the work focuses in showing approximation guarantees for simple selling mechanisms~\citep{Hart:2012uq,Li:2013ty,Babaioff:2014ys,g2014,Bateni:2014ph,Rubinstein:2015kx}. This difficulty is further supported by the complexity ($\#P$-hardness) results of~\citet{Daskalakis:2012fk}. It is important to point out that even for two items \emph{we know of no general and simple, closed-form conditions framework under which optimality can be extracted when given as input the item distributions, in the case when these are not necessarily identical.} This is our goal in the current paper. 
 
\paragraph{Our contribution}
We introduce general but simple and clear, closed-form distributional conditions that can guarantee optimality and immediately give the form of the revenue-maximizing selling mechanism (its payment and allocation rules), for the setting of two goods with values distributed over bounded intervals (Theorem~\ref{th:characterization_main}). For simplicity and a clearer exposition we study distributions supported over the real unit interval $[0,1]$. By scaling, the results generalize immediately to intervals that start at $0$, but more work would be needed to generalize them to arbitrary intervals. We use the closed forms to get optimal solutions for a wide class of distributions satisfying certain simple analytic assumptions (Theorem~\ref{th:characterization_iid} and Sect.~\ref{sec:non-iid}). As useful examples, we provide exact solutions for families of monomial ($\propto x^c$) and exponential ($\propto e^{-\lambda x}$) distributions (Corollaries~\ref{th:optimal_two_power} and \ref{th:optimal_two_expo} and Sect.~\ref{sec:non-iid}), and also near-optimal results for power-law ($\propto (x+1)^{-\alpha}$) distributions (Sect.~\ref{sec:approximate_convex_fail}). This last approximation is an application of a more general result (Theorem~\ref{th:two_iid_approx}) involving the relaxation of some of the conditions for optimality in the main Theorem~\ref{th:characterization_main}; the ``solution'' one gets in this new setting might not always correspond to a feasible selling mechanism, however it still provides an upper bound on the optimal revenue as well as hints as to how to design a well-performing mechanism, by ``convexifying'' it into a feasible mechanism (Sect.~\ref{sec:approximate_convex_fail}).

Particularly for the family of monomial distributions it turns out that the optimal mechanism is a very simple deterministic mechanism that offers to the seller a menu of size just $4$ (using the menu-complexity notion of Hart and Nisan \citep{Hart:2012ys,Wang:2013ab}): fixed prices for each one of the two items and for their bundle, as well as the option of not buying any of them. For other distributions studied in the current paper randomization is essential for optimality, as is generally expected in such problems of multidimensional revenue maximization (see e.g.~\citep{Hart:2012zr,Pavlov:2011fk,Daskalakis:2013vn}). For example, this is the case for two i.i.d. exponential distributions over the unit interval $[0,1]$, which gives the first such example where determinism is suboptimal even for regularly\footnote{A probability distribution $F$ is called \emph{regular} if $t-\frac{1-F(t)}{f(t)}$ is increasing. This quantity is known as the \emph{virtual valuation}.} i.i.d.\ items. 
A point worth noting here is the striking difference between this result and previous results~\citep{Daskalakis:2013vn,g2014} about i.i.d.\ exponential distributions which have as support the entire $\R_+$: the optimal selling mechanism there is the deterministic one that just offers the full bundle of both items.

Although the conditions that the probability distributions must satisfy are quite general, they leave out a large class of distributions. For example, they do not apply to power-law distributions with parameter $\alpha>2$. In other words, this work goes some way towards the complete solution for arbitrary distributions for two items, but the general problem is still open. In this paper, we opted towards simple conditions rather than full generality, but we believe that extensions of our method can generalize significantly the range of distributions; we expect that a proper ``ironing'' procedure will enable our technique to resolve the general problem for two items.

\paragraph{Techniques}
The main result of the paper (Theorem~\ref{th:characterization_main}) is proven by utilizing the \emph{duality} framework of~\citep{gk2014} for revenue maximization, and in particular using complementarity: the optimality of the proposed selling mechanism is shown by verifying the existence of a dual solution with which they satisfy together the required complementary slackness conditions of the duality formulation. Constructing these dual solutions explicitly seems to be a very challenging task and in fact there might not even be a concise way to do it, especially in closed-form. So instead we just prove the existence of such a dual solution, using a \emph{max-flow min-cut} argument as main tool (Lemma~\ref{lemma:coloring}, Fig.~\ref{fig:flows_graph}). This is, in a way, an abstraction of a technique followed in~\citep{gk2014} for the case of uniform distributions which was based on Hall's theorem for bipartite matchings. Since here we are dealing with general and non-identical distributions, this kind of refinement is essential and non-trivial, and in fact forms the most technical part of the paper. Our approach has a strong geometric flavor, enabled by introducing the notion of the \emph{deficiency} of a two-dimensional body (Definition~\ref{def:deficiency}, Lemma~\ref{lemma:no_positive_def}), which is inspired by classic matching theory~\citep{Ore:1955fk,Lovasz:1986qf}. 

\subsection{Model and Notation}
We study a two-good monopoly setting in which a seller deals with a buyer who has values $x_1, x_2\in I$ for the items, where $I=[0,1]$. The seller has only an incomplete knowledge of the buyer's preference, in the form of two independent distributions (with densities) $f_1$, $f_2$ over $I$ from which $x_1$ and $x_2$ are drawn, respectively. The cdf of $f_j$ will be denoted by $F_j$. As in the seminal work of~\citet{Myerson:1981aa}, the density functions will be assumed to be absolutely continuous and positive.
We will also use vector notation $\vecc x=(x_1,x_2)$. For any item $j\in\{1,2\}$, index $-j$ will refer the complementary item, that is $3-j$, and as it's standard in game theory $\vecc x_{-j}=x_{-j}$ will denote the remaining of vector $\vecc x$ if the $j$-th coordinate is removed, so $\vecc x=(x_j,x_{-j})$ for any $j=1,2$.

The seller's goal is to design a selling mechanism that will maximize his revenue. Without loss\footnote{This is due to the celebrated Revelation Principle~\citep{Myerson:1981aa}.} we can focus on direct-revelation mechanisms: the bidder will be asked to submit bids $b_1,b_2$ and the mechanism consists simply of an allocation rule $a_1,a_2:I^2\to I$ and a payment function $p:I^2\to\R_+$ such that $a_j(b_1,b_2)$ is the probability of item $j$ being sold to the buyer (notice how we allow for randomized mechanisms, i.e.~lotteries) and $p(b_1,b_2)$ is the payment that the buyer expects to pay; it is easier to consider the expected payment for all allocations, rather than individual payments that depend on the allocation of items. The reason why the bids $b_j$ are denoted differently than the original values $x_j$ for the items is that, since the bidder is a rational and selfish agent, she might lie and misreport $b_j\neq x_j$ if this is to increase her personal gain given by the quasi-linear \emph{utility} function 
\begin{equation}
\label{eq:utility}
u(\vecc b;\vecc x)\equiv a_1(\vecc b) x_1+a_2(\vecc b) x_2-p(\vecc b),
\end{equation}
the expected happiness she'll receive by the mechanism minus her payment. 
Thus, we will demand our selling mechanisms to satisfy the following standard properties: 
\begin{itemize}
\item \emph{Incentive Compatibility (IC)}, also known as truthfulness, saying that the player would have no incentive to misreport and manipulate the mechanism, i.e.\ her utility is maximized by truth-telling: $u(\vecc b;\vecc x)\leq u(\vecc x;\vecc x)$ 
\item \emph{Individual Rationality (IR)}, saying that the buyer cannot harm herself just by truthfully participating in the mechanism: $u(\vecc x;\vecc x)\geq 0$.
\end{itemize}
It turns out the critical IC property comes without loss\footnote{Also due to the Revelation Principle.} for our revenue-maximization objective, so for now on we will only consider truthful mechanisms, meaning we can also relax the notation $u(\vecc b;\vecc x)$ to just $u(\vecc x)$.

There is a very elegant and helpful analytic characterization of truthfulness, going back to~\citet{Rochet:1985aa} (for a proof see e.g.~\citep{Hart:2012uq}), which states that the player's utility function must be \emph{convex} and that the allocation probabilities are simply given by the utility's derivatives, i.e.\ $\partial u(\vecc x)/\partial x_j=a_j(\vecc x)$. Taking this into consideration and rearranging~\eqref{eq:utility} with respect to the payment, we define
$$
\mathcal R_{f_1,f_2}(u)\equiv \int_0^1\int_0^1\left(\frac{\partial u(\vecc x)}{\partial x_1}x_1+\frac{\partial u(\vecc x)}{\partial x_2}x_2-u(\vecc x) \right)f_1(x_1)f_2(x_2)\,dx_1\,dx_2
$$
for every absolutely continuous function $u:I^2\map\R_+$. If $u$ is convex with partial derivatives in $[0,1]$ then $u$ is a valid utility function and \emph{$\mathcal R_{f_1,f_2}(u)$ is the expected revenue of the seller under the mechanism induced by $u$}. Let $\rev(f_1,f_2)$ denote the best possible such revenue, i.e.\ the supremum of $\mathcal R_{f_1,f_2}(u)$ when $u$ ranges over the space of all feasible utility functions over $I^2$. So the problem we want to deal with in this paper is exactly that of $\sup_u \mathcal R_{f_1,f_2}(u)$.

We now present the condition on the probability distributions which will enable our technique to provide a closed-form of the optimal auction.

\begin{assumption}
\label{assume:upwards_def}
\label{assume:regularity}
The probability distributions $f_1,f_2$ are such that functions $h_{f_1,f_2}(\vecc x)-f_2(1)f_1(x_1)$ and $h_{f_1,f_2}(\vecc x)-f_1(1)f_2(x_2)$ are nonnegative, where
\begin{equation}
\label{eq:def_h}
h_{f_1,f_2}(\vecc x)\equiv 3 f_1(x_1)f_2(x_2)+x_1f_1'(x_1)f_2(x_2)+x_2f'_2(x_2)f_1(x_1).
\end{equation}
Function $h_{f_1,f_2}$ will also be assumed to be absolutely continuous with respect to each of its coordinates.
\end{assumption}
We will drop the subscript $f_1,f_2$ in the above notations whenever it is clear which distributions we are referring to. Assumption~\ref{assume:upwards_def} is a slightly stronger condition than $h(\vecc x)\geq 0$ which is a common regularity assumption in the economics literature for multidimensional auctions with $m$ items: $(m+1)f(\vecc x)+\nabla f(\vecc x)\cdot \vecc x\geq 0$, where $f$ is the joint distribution for the item values (see e.g.~\citep{Manelli:2006vn,Pavlov:2011fk,McAfee:1988nx}). In fact, \citet{Manelli:2006vn} make the even stronger assumption that for each item $j$, $x_j f_j(x_j)$ is an increasing function. Even more recently, that assumption has also been deployed by \citet{Wang:2013ab} in a two-item setting as one of their sufficient conditions for the existence of optimal auctions with small-sized menus.  It has a strong connection with the standard single-dimensional regularity condition of~\citet{Myerson:1981aa}, since for $m=1$ condition $h(\vecc x)\geq 0$ gives that $f(x)\left( x-\frac{1-F(x)}{f(x)} \right)$ is increasing, thus ensures the single-crossing property of the virtual valuation function (see also the discussion in \citep[Sect.~2]{Manelli:2006vn}). 

Strengthening the regularity condition $h(\vecc x)\geq 0$ to that of Assumption~\ref{assume:upwards_def} is essentially only used 
as a technical tool within the proof of Lemma~\ref{lemma:no_positive_def}, and
as a matter of fact we don't really need it to hold in the entire unit box $I^2$ but just in a critical sub-region $D_{1,2}$ which corresponds to the valuation subspace where both items are sold with probability $1$ (see Fig.~\ref{fig:Exp_Uniform} and Sect.~\ref{sec:partition_optimal}). As mentioned earlier in the Introduction, we introduce this technical conditions in order to simplify our exposition and enforce the clarity of the techniques, but we believe that a proper ``ironing''~\citep{Myerson:1981aa} process can probably bypass these restrictions and generalize our results.
The critical Assumption~\ref{assume:upwards_def} is of course satisfied by all distributions considered in the results of this paper, namely monomial $\propto x^c$ for any power $c\geq 0$ (Corollary~\ref{th:optimal_two_power}), exponential $\propto e^{-\lambda x}$ with rates $\lambda\leq 1$ (Corollary~\ref{th:optimal_two_expo}), power-law $\propto (t+1)^{-\alpha}$ with parameters $\alpha\leq 2$ (Example~\ref{example:power-law}), as well as combinations of these (see Example~\ref{example:uniform-expo}). However, there is still a large class of distributions not captured by Assumption~\ref{assume:upwards_def} as it is, e.g.\ exponential with rates larger than $1$, power-law with parameters greater than $2$ and some beta-distributions (take, for example, $\propto x^2(1-x)^2$). See Footnote~\ref{foot:alter-assumption-monotone} for an alternative condition that can replace Assumption~\ref{assume:upwards_def}.
\section{Sufficient Conditions for Optimality}
This section is dedicated to proving the main result of the paper:
\begin{theorem}
\label{th:characterization_main}
If there exist decreasing, concave functions $s_1,s_2:I\to I$, with $s_1'(t),s_2'(t)> -1$ for all $t\in I$, such that for almost every\footnote{Everywhere except a subset of zero Lebesgue measure.} (a.e.) $x_1,x_2\in I$
\begin{equation}
\label{eq:1slice_gen_functions}
\frac{s_1(x_2)f_1(s_1(x_2))}{1-F_1(s_1(x_2))} =2+\frac{x_2f_2'(x_2)}{f_2(x_2)}
\quad\text{and}\quad
\frac{s_2(x_1)f_2(s_2(x_1))}{1-F_2(s_2(x_1))} =2+\frac{x_1f_1'(x_1)}{f_1(x_1)}, 
\end{equation}
then 
there exists a constant $p\in[0,2]$ such that 
\begin{equation}
\label{eq:2slice_gen}
\int_{D}h(\vecc x)\,dx_1\,dx_2 
=f_1(1)+f_2(1)
\end{equation}
where $D$ is the region of $I^2$ enclosed by curves\footnote{See Fig.~\ref{fig:Exp_Uniform}.} $x_1+x_2=p$, $x_1=s_1(x_2)$ and $x_2=s_2(x_1)$ and including point $(1,1)$, i.e.~$D=\sset{\vecc x\in I\fwh{x_1+x_2\geq p\lor x_1\geq s_1(x_2) \lor x_2\geq s_2(x_1)}}$,   
and the optimal selling mechanism is given by the utility function
\begin{equation}
\label{eq:optimal_auction_gen}
u(\vecc x)=\max\sset{0,x_1-s_1(x_2),x_2-s_2(x_1),x_1+x_2-p}.
\end{equation}
In particular, if $p\leq\min\sset{s_1(0),s_2(0)}$, then the optimal mechanism is the deterministic full-bundling with price $p$. 
\end{theorem}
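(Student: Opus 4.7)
First I verify that $u$ defined by \eqref{eq:optimal_auction_gen} is a valid utility function. As a pointwise maximum of four affine functions, $u$ is convex and nonnegative; since $s_1,s_2$ are decreasing with derivatives strictly greater than $-1$, the partial derivatives on each of the four affine pieces lie in $[0,1]$, so truthfulness is automatic via the Rochet characterization. These pieces partition $I^2$ into the four regions $Z,\,D_1,\,D_2,\,D_{1,2}$, and the induced allocation rule (deterministic on the bundle region, and randomized at rates $-s_j'$ on the single-item regions) is read off directly from the gradient of the active piece. For the existence of $p\in[0,2]$ solving \eqref{eq:2slice_gen}, I would apply the intermediate value theorem: integration by parts (using $\int_0^1 x f_j'(x)\,dx=f_j(1)-1$) yields $\int_{I^2} h\,d\vecc x=1+f_1(1)+f_2(1)$, so at $p=0$ the left-hand side of \eqref{eq:2slice_gen} exceeds the right-hand side by exactly $1$, while the integral is continuous and monotonically decreasing in $p$ and drops below $f_1(1)+f_2(1)$ by $p=2$ (which one verifies by an explicit computation of $\int h$ over $\{x_1\ge s_1(x_2)\}\cup\{x_2\ge s_2(x_1)\}$, using \eqref{eq:1slice_gen_functions} to collapse the inner integrals down to $f_1(1)f_2(x_2)$ and $f_2(1)f_1(x_1)$ respectively).

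To establish optimality I would invoke the duality framework of \citep{gk2014}: after integration by parts, $\mathcal{R}_{f_1,f_2}(u)$ is expressed as an integral of $u$ against an explicit signed measure built from $h$, $f_1$, $f_2$ and boundary contributions, and the corresponding dual seeks a nonnegative ``transport'' measure whose value upper bounds that of every feasible utility. Optimality is then certified by complementary slackness, which demands that the dual be concentrated on pairs compatible with the subdifferential of $u$ across the four regions. A direct calculation shows that \eqref{eq:1slice_gen_functions} is precisely the pointwise mass-balance condition needed to match the dual along the graphs $x_1=s_1(x_2)$ and $x_2=s_2(x_1)$, and \eqref{eq:2slice_gen} is the global mass-balance condition across the diagonal interface $x_1+x_2=p$; thus the one-dimensional boundary conditions of the dual are settled by the hypotheses of the theorem.

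The main obstacle is constructing the required dual certificate inside the full-bundle region $D_{1,2}$, where no closed-form transport plan is apparent. I would follow the paper's strategy and prove only the \emph{existence} of such a certificate, by reducing the question to a max-flow/min-cut problem on a graph encoding the admissible infinitesimal mass movements within $D_{1,2}$ (Lemma~\ref{lemma:coloring}, Fig.~\ref{fig:flows_graph}). Feasibility of this flow then reduces to the statement that every admissible sub-body of $D_{1,2}$ has non-positive \emph{deficiency} (Definition~\ref{def:deficiency}), which is the content of Lemma~\ref{lemma:no_positive_def}; this is precisely where the strengthened regularity inequalities $h\ge f_2(1)f_1$ and $h\ge f_1(1)f_2$ from Assumption~\ref{assume:upwards_def} are consumed, guaranteeing that the interior mass of any relevant sub-body dominates the boundary flux it must support. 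Once the dual measure exists, complementary slackness closes the optimality argument. Finally, the special case $p\le\min\{s_1(0),s_2(0)\}$ is immediate: since $t\mapsto t+s_j(t)$ is increasing (as $s_j'>-1$) with minimum value $s_j(0)\ge p$, the line $x_1+x_2=p$ lies below both curves within $I^2$, so $D_1$ and $D_2$ are empty and \eqref{eq:optimal_auction_gen} reduces to $\max\{0,x_1+x_2-p\}$, which is exactly the utility of deterministic full-bundling at price $p$.
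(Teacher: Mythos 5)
Your proposal matches the paper's proof strategy essentially step by step: validity of $u$ via Rochet, existence of $p$ via the intermediate value theorem, optimality via the duality/complementary-slackness framework of~\citep{gk2014}, with the hard part (constructing the dual inside $D_{1,2}$) delegated to the max-flow/min-cut argument of Lemma~\ref{lemma:coloring} and the deficiency bound of Lemma~\ref{lemma:no_positive_def}, and the full-bundling case handled by the observation that $s_j'>-1$ forces $t\mapsto t+s_j(t)$ to be increasing. One small slip: the four pieces of $u$ in~\eqref{eq:optimal_auction_gen} are not affine, since $x_1-s_1(x_2)$ and $x_2-s_2(x_1)$ depend nonlinearly on $\vecc x$ through the (generally nonlinear) curves $s_1,s_2$; they are convex, because $s_1,s_2$ are concave, so the conclusion that $u$ is a pointwise max of convex functions and hence convex still holds, and the gradient of the active piece indeed varies with $\vecc x$, as you yourself use when reading off the randomized allocation rates $-s_j'$.
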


Notice that for any $s\in I$ we have
\begin{align*}
\int_s^1h(\vecc x)\,dx_1 &=\int_s^1 3f_1(x_1)f_2(x_2)+x_1f_1'(x_1)f_2(x_2)+x_2f_2'(x_2)f_1(x_1) \,dx_1\\
				&=3f_2(x_2)(1-F_1(s))+f_2(x_2)\int_s^1x_1f_1'(x_1)\,dx_1+x_2f_2'(x_2)(1-F_1(s))\\
				&=3f_2(x_2)(1-F_1(s))+f_2(x_2)\left(\left[x_1f_1(x_1)\right]_s^1-(1-F_1(s))\right)+x_2f_2'(x_2)(1-F_1(s))\\
				&=2f_2(x_2)(1-F_1(s))+f_2(x_2)(f_1(1)-sf_1(s))+x_2f_2'(x_2)(1-F_1(s))\\
				&=(1-F_1(s))f_2(x_2)\left[2+\frac{x_2f_2'(x_2)}{f_2(x_2)}-\frac{sf_1(s)}{1-F_1(s)} \right] +f_1(1)f_2(x_2)
\end{align*}
which means that an equivalent way of looking at~\eqref{eq:1slice_gen_functions} is, more simply, by
\begin{equation}
\label{eq:1slice_gen_integrals}
\int_{s_1(x_2)}^1h(\vecc x)\,dx_1=f_1(1)f_2(x_2)
\quad\text{and}\quad
\int_{s_2(x_1)}^1h(\vecc x)\,dx_2=f_2(1)f_1(x_1).
\end{equation}
This also means that~\eqref{eq:1slice_gen_integrals} can take the place of~\eqref{eq:1slice_gen_functions} in the statement of Theorem~\ref{th:characterization_main} whenever this gives an easier way to solve for functions $s_1$ and $s_2$.
\subsection{Partitioning of the Valuation Space}
\label{sec:partition_optimal}
\begin{figure}
\centering
\includegraphics[width=10cm]{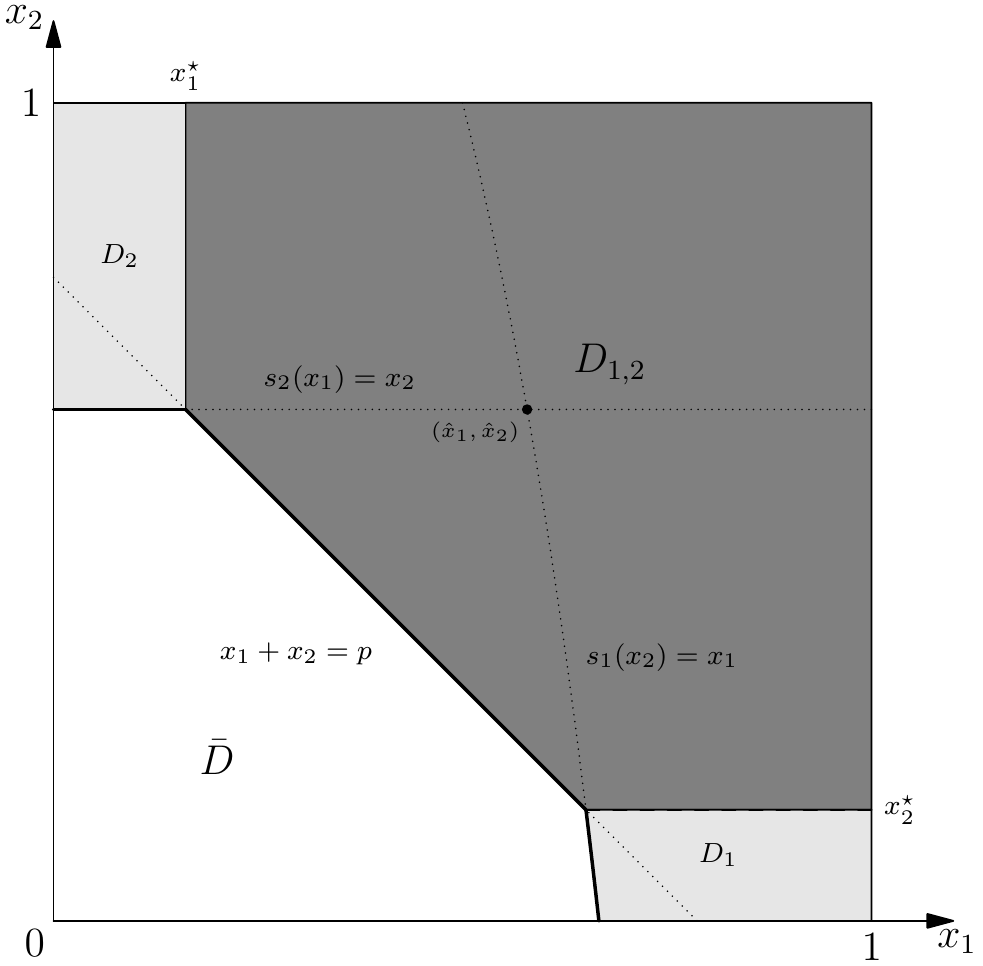}
\caption{\footnotesize The valuation space partitioning of the optimal selling mechanism for two independent items, one following a uniform distribution and the other an exponential with parameter $\lambda=1$. Here $s_1(t)=(2-t)/(3-t)$, $s_2(t)=2-W(2e)\approx 0.625$  and $p\approx 0.787$. In region $D_{1}$ (light grey) item $1$ is sold deterministically and item $2$ with a probability of $-s_1'(x_2)$, in $D_{2}$ (light grey) only item $2$ is sold and region $D_{1,2}$ (dark grey) is where the full bundle is sold deterministically, for a price of $p$.}
\label{fig:Exp_Uniform}
\end{figure}
Due to the fact that the derivatives of functions $s_j$ in Theorem~\ref{th:characterization_main} are above $-1$, each curve $x_1=s_1(x_2)$ and $x_2=s_2(x_1)$ can intersect the full-bundle line $x_1+x_2=p$ at most at a single point. So let $x_2^*=x_2^*(p), x_1^*=x_1^*(p)$ be the coordinates of these intersections, respectively, i.e.~$s_1(x_2^*)=p-x_2^*$ and $s_2(x_1^*)=p-x_1^*$. If such an intersection does not exist, just define $x_2^*=0$ or $x_1^*=0$.

The construction and the optimal mechanism given in Theorem~\ref{th:characterization_main} then gives rise to the following partitioning of the valuation space $I^2$ (see Fig.~\ref{fig:Exp_Uniform}):
\begin{itemize}
\item Region $\bar D=I^2\setminus D$ where no item is allocated
\item Region $D_1=\sset{\vecc x\in I^2\fwh{x_1\geq s_1(x_2)\land x_2\leq x_2^*}}$ where item $1$ is sold with probability $1$ and item $2$ with probability $-s_1'(x_2)$ for a price of $s_1(x_2)-x_2s_1'(x_2)$
\item Region $D_2=\sset{\vecc x\in I^2\fwh{x_2\geq s_2(x_1)\land x_1\leq x_1^*}}$ where item $2$ is sold with probability $1$ and item $1$ with probability $-s_2'(x_1)$ for a price of $s_2(x_1)-x_1s_2'(x_1)$
\item Region $D_{1,2}=D\setminus{D_1\union D_2}=\sset{\vecc x\in I^2\fwh{x_1+x_2\geq p \land x_1\geq x_1^* \land x_2\geq x_2^*}}$ where both items are sold deterministically in a full bundle of price $p$.
\end{itemize}

Under this decomposition, by \eqref{eq:1slice_gen_integrals}:
$$
\int_{D_{1}}h(\vecc x)\,dx_1\,dx_2=\int_{0}^{x_2^*}\int_{s_1(x_2)}^1h(\vecc x)\,dx_1\,dx_2=f_1(1)F_2(x_2^*)
$$
so expression~\eqref{eq:2slice_gen} can be written equivalently as
\begin{equation}
\label{eq:2slice_gen_bundle_region}
\int_{D_{1,2}}h(\vecc x)\,dx_1\,dx_2 
= f_1(1)(1-F_2(x_2^*))+f_2(1)(1-F_1(x_1^*)).
\end{equation}
\subsection{Duality}
\label{sec:duality}
The major underlying tool to prove Theorem~\ref{th:characterization_main} will be the duality framework of~\citep{gk2014}. For completeness we briefly present here the formulation and key aspects, and the interested reader is referred to the original text for further details. 

Remember that the revenue optimization problem we want to solve here is to maximize $\mathcal R(u)$ over the space of all convex functions $u:I^2\map\R_+$ with
\begin{equation}
\label{eq:allocs_probs_01}
0\leq \frac{\partial u(\vecc x)}{\partial x_j}\leq 1,\qquad j=1,2,
\end{equation}
for a.e. $\vecc x\in I^2$. First we relax this problem by dropping the convexity assumption and replacing it with (absolute) continuity. We also drop the lower bound in~\eqref{eq:allocs_probs_01}. Then this new relaxed program is dual to the following: minimize $\int_0^1\int_0^1 z_1(\vecc x)+z_2(\vecc x)\,d\vecc x$ where the new dual variables $z_1,z_2:I^2\map\R_+$ are such that $z_j$ is (absolutely) continuous with respect to its $j$-coordinate and the following conditions are satisfied for all $x_1,x_2\in I$:
\begin{align}
z_j(0,x_{-j}) &=0, &&j=1,2, \label{eq:dual_const_1}\\
z_j(1,x_{-j}) &\geq f_j(1)f_{-j}(x_{-j}), &&j=1,2, \label{eq:dual_const_2}\\
\frac{\partial z_1(\vecc x)}{\partial x_2}+\frac{\partial z_2(\vecc x)}{\partial x_2} &\leq 3 f_1(x_1)f_2(x_2)+ x_1f_1'(x_1)f_2(x_2)+x_2f_1(x_1)f_2'(x_2).\label{eq:dual_const_3}
\end{align}
We will refer to the first optimization problem, where $u$ ranges over the relaxed space of continuous, nonnegative functions with derivatives at most $1$, as the \emph{primal program} and to the second as the \emph{dual}. Intuitively, every dual solution $z_j$ must start at zero and grow all the way up to $f_j(1)f_{-j}(x_{-j})$ while travelling in interval $I$, in a way that the sum of the rate of growth of both $z_1$ and $z_2$ is never faster than the right hand side of~\eqref{eq:dual_const_3}.
In~\citep{gk2014} is proven that indeed these two programs satisfy both weak duality, i.e.~for any feasible $u,z_1,z_2$ we have
$$
\mathcal R(u)\leq \int_{0}^1\int_0^1 z_1(\vecc x)+z_2(\vecc x)\,d\vecc x
$$ 
as well as complementary slackness, in the form of the even stronger following form of $\varepsilon$-complementarity: 

\begin{lemma}[Complementarity]\label{lemma:complementarity}
If $u,z_1,z_2$ are feasible primal and dual solutions, respectively, $\varepsilon>0$ and the following complementarity constraints hold for a.e. $\vecc x\in I^2$,
\begin{align}
  u(\vecc x)   \left( h(\vecc x)
    -\frac{\partial z_1(\vecc x)}{\partial x_1}-\frac{\partial z_2(\vecc x)}{\partial x_2}
  \right) &\leq \varepsilon f_1(x_1)f_2(x_2), \label{eq:e_compl_2}\\
u(1,x_{-j})  \left( z_j(1, x_{-j})
    -f_j(1)f_{-j}(x_{-j}) \right) &\leq \varepsilon f_j(1)f_{-j}(x_{-j}) \label{eq:e_compl_3}, &&j=1,2,\\
  z_j(\vecc x)  \left( 1 - \frac{\partial
        u(\vecc x)}{\partial x_j} \right)   &\leq \varepsilon f_1(x_1)f_2(x_2),  &&j=1,2, \label{eq:e_compl_4}
\end{align}
where $h$ is defined in~\eqref{eq:def_h}, then the values of the primal and dual programs differ by at most
$7\varepsilon$. In particular, if the conditions are satisfied
with $\varepsilon=0$, both solutions are optimal.
\end{lemma}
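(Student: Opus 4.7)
The plan is to show that with $\varepsilon$-complementarity in hand, the duality gap between $\int\int (z_1+z_2)\,d\vecc x$ and $\mathcal R(u)$ collapses to a sum of seven small error terms, one contributed by each inequality in each direction. The strategy is to rewrite both sides via integration by parts so that they become expressions of the same shape (a boundary contribution on $\{x_j=1\}$ minus an interior integral of $u$ against a nonnegative ``residual''), and then use the three complementarity inequalities to match them term-by-term.

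First I would massage the primal. Applying integration by parts in $x_1$ (and symmetrically in $x_2$) to
$\int_0^1 \partial_{x_1} u(\vecc x)\, x_1 f_1(x_1)\,dx_1 = u(1,x_2) f_1(1) - \int_0^1 u\,(f_1 + x_1 f_1')\,dx_1,$
and then multiplying by $f_2(x_2)$ and integrating over $x_2$ (and symmetrically for the other partial) one obtains
\begin{equation*}
\mathcal R(u) = \int_0^1 u(1,x_2) f_1(1) f_2(x_2)\,dx_2 + \int_0^1 u(x_1,1) f_1(x_1) f_2(1)\,dx_1 - \int\int u(\vecc x)\,h(\vecc x)\,d\vecc x,
\end{equation*}
since the interior coefficient that appears is precisely $h$ from \eqref{eq:def_h}.

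Next I would massage the dual in the same direction. Starting from \eqref{eq:e_compl_4}, integrating over $I^2$ and using $\int\int f_1 f_2=1$ gives
$\int\int z_j\,d\vecc x \leq \int\int z_j\,\partial_{x_j} u\,d\vecc x + \varepsilon,$
so that summing in $j$ yields the first $2\varepsilon$ of slack. Now I would integrate by parts in the $x_j$-direction on each $\int\int z_j\,\partial_{x_j}u$, using the boundary condition \eqref{eq:dual_const_1} that $z_j(0,x_{-j})=0$, to convert these into
$\int u(1,x_2)\,z_1(1,x_2)\,dx_2 + \int u(x_1,1)\,z_2(x_1,1)\,dx_1 - \int\int u\,(\partial_{x_1}z_1 + \partial_{x_2}z_2)\,d\vecc x.$
The absolute continuity assumed on $z_j$ in its $j$-th coordinate is exactly what legitimates this step.

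It then remains to bound each piece against its counterpart in $\mathcal R(u)$. The boundary integrals are handled by \eqref{eq:e_compl_3}: integrating that inequality over $x_{-j}\in I$ gives
$\int u(1,x_{-j})\,z_j(1,x_{-j})\,dx_{-j} \leq \int u(1,x_{-j}) f_j(1) f_{-j}(x_{-j})\,dx_{-j} + \varepsilon\, f_j(1),$
contributing two further slack terms (one for each $j$). The interior integral is handled by \eqref{eq:e_compl_2}: rearranging gives
$-u\,(\partial_{x_1} z_1 + \partial_{x_2} z_2) \leq -u\,h + \varepsilon f_1 f_2,$
which upon integration contributes one more $\varepsilon$. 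Summing all slacks and matching the remaining terms to the rewritten $\mathcal R(u)$ yields $\int\int(z_1+z_2)\,d\vecc x \leq \mathcal R(u) + 7\varepsilon$, which is the claim; the $\varepsilon=0$ consequence is immediate from weak duality. The main delicate point is checking that the application of integration by parts in the dual really produces only the boundary terms on $\{x_j=1\}$ (i.e.\ that the $\{x_j=0\}$ contribution vanishes), which is precisely why \eqref{eq:dual_const_1} was imposed; and checking that each of the three complementarity inequalities has the correct sign of product (which follows because $u\ge 0$, $z_j\ge 0$, $\partial_{x_j}u \le 1$, and $h-\partial_{x_1}z_1-\partial_{x_2}z_2 \ge 0$ by \eqref{eq:dual_const_3}), so that the $\varepsilon$-error terms are truly one-sided and accumulate additively rather than cancelling ambiguously.
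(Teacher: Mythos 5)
The paper itself does not prove this lemma: it is imported wholesale from the duality framework of \citep{gk2014}, so there is no in-paper proof to compare against. Judged on its own terms, your argument is the natural one and is structurally sound: rewriting $\mathcal R(u)$ by integration by parts so that the interior kernel becomes exactly $h$ of \eqref{eq:def_h}, integrating \eqref{eq:e_compl_4} to pass from $\int\!\!\int z_j$ to $\int\!\!\int z_j\,\partial_{x_j}u$, integrating by parts in the $j$-th coordinate using $z_j(0,x_{-j})=0$ from \eqref{eq:dual_const_1}, and then controlling the boundary and interior terms with \eqref{eq:e_compl_3} and \eqref{eq:e_compl_2} is precisely the right bookkeeping; combined with weak duality it gives the two-sided bound, and the $\varepsilon=0$ optimality claim follows. (A small remark: the nonnegativity of the residuals that you flag as a delicate point is not actually needed, since each complementarity inequality is only used in the one direction in which it is stated.)

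The one concrete discrepancy is the constant. Your own accounting gives slack $\varepsilon$ from each of the two instances of \eqref{eq:e_compl_4} (since $\int\!\!\int f_1f_2=1$), $\varepsilon$ from \eqref{eq:e_compl_2}, and $\varepsilon f_j(1)$ from each instance of \eqref{eq:e_compl_3}, i.e.\ a total gap of $\bigl(3+f_1(1)+f_2(1)\bigr)\varepsilon$, not $7\varepsilon$. This matches the stated $7\varepsilon$ only when $f_1(1)+f_2(1)\leq 4$ (e.g.\ the uniform case of \citep{gk2014}, from which the constant is inherited); for, say, monomial densities $f(t)=(c+1)t^c$ with large $c$ it does not, so your final sentence asserting $7\varepsilon$ does not follow from the computation you performed. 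This is a cosmetic issue for the paper --- the lemma is only ever invoked with $\varepsilon$ arbitrarily small, and your proof does deliver a gap of $O(\varepsilon)$ with a constant depending only on $f_1(1),f_2(1)$, as well as exact optimality at $\varepsilon=0$ --- but you should either state the constant you actually obtain or note the extra hypothesis under which $7\varepsilon$ is valid.
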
   

Our approach into proving Theorem~\ref{th:characterization_main} will be to show the existence of a pair of dual solutions $z_1,z_2$ with respect to which the utility function $u$ given by the theorem indeed satisfies complementarity. Notice here the existential character of our technique: our duality approach offers the advantage to use the proof of just the existence of such duals, without having to explicitly describe them and compute their objective value in order to prove optimality, i.e.~that the primal and dual objectives are indeed equal. Also notice that the utility function $u$ given by Theorem~\ref{th:characterization_main} is convex by construction, so in case someone shows optimality for $u$ in the relaxed setting, then $u$ must also be optimal among all feasible mechanisms.

Define function $W:I^2\to\R_+$ by
$$
W(\vecc x)=
\begin{cases}
h(\vecc x), &\text{if}\;\; \vecc x\in D,\\
0, &\text{otherwise},
\end{cases} 
$$
where $D$ is defined in Sect.~\ref{sec:partition_optimal} (see Fig.~\ref{fig:Exp_Uniform}).
If one could decompose $W$ into functions $w_1,w_2:I^2\to\R_+$ such that
\begin{align}
w_1(\vecc x)+w_2(\vecc x) &=W(\vecc x)\label{eq:Wdecomp_sum} \\
\int_0^1w_j(\vecc x)\,d x_j &= f_j(1)f_{-j}(x_{-j}) \label{eq:Wdecomp1}, \qquad j=1,2,
\end{align}
for all $\vecc x\in I$, and $w_j$ is almost everywhere continuous with respect  to its $j$-th coordinate, then by defining 
$$
z_j(\vecc x)=\int_0^{x_j} w_j(t,x_{-j})\,dt
$$
we'll have
\begin{align}
\frac{\partial z_1(\vecc x)}{\partial x_1}+\frac{\partial z_2(\vecc x_2)}{\partial x_2} &=
\begin{cases}
 h(\vecc x), & \text{for}\;\; \vecc x\in D,\\
 0, &\text{otherwise},
 \end{cases}
 \label{eq:prop_dual_1}
 \\
z_j(0,x_{-j}) &=0, && j=1,2, \label{eq:prop_dual_2} \\
z_j(1,x_{-j}) &= f_j(1)f_{-j}(x_{-j}), && j=1,2. \label{eq:prop_dual_3} 
\end{align}
If the requirements of Theorem~\ref{th:characterization_main} hold, then it is fairly straightforward to get such a decomposition in certain regions. In particular, we can set $w_1=w_2=0$ in $I^2\setminus D$, $w_1=W=h$ and $w_2=0$ in $D_1$ and $w_2=W=h$ and $w_1=0$ in $D_2$. Then, by~\eqref{eq:1slice_gen_integrals}, it is not difficult to see that indeed conditions~\eqref{eq:Wdecomp_sum}--\eqref{eq:Wdecomp1} are satisfied. However, \emph{it is highly non-trivial how to create such a decomposition in the remaining region $D_{1,2}$} and that is what the proof of Lemma~\ref{lemma:coloring} achieves, with the assistance of the geometric Lemma~\ref{lemma:no_positive_def}, in the remaining of this section. This is the most technical part of the paper.

In any case, if we are able to get such a decomposition, by the previous discussion that would mean that functions $z_1,z_2:I^2\to\R_+$ are \emph{feasible dual} solutions: it is trivial to verify that properties~\eqref{eq:prop_dual_1}--\eqref{eq:prop_dual_3} satisfy the dual constraints \eqref{eq:dual_const_1}--\eqref{eq:dual_const_3}.  But most importantly, the \emph{equalities} in properties~\eqref{eq:prop_dual_1}--\eqref{eq:prop_dual_3} and the way $w_1$ and $w_2$ are defined  in regions $D_1$ and $D_2$ tell us something more: that this pair of solutions would satisfy complementarity with respect to the primal given in~\eqref{eq:optimal_auction_gen} and whose allocation is analyzed in detail in Sect.~\ref{sec:partition_optimal}, thus proving that this mechanism is optimal and thus establishing Theorem~\ref{th:characterization_main}.

\subsection{Deficiency}
\label{sec:nodef}
The following notion will be the tool that gives a very useful geometric interpretation to the rest of the proof of Theorem~\ref{th:characterization_main} and it will be critical into proving Lemma~\ref{lemma:coloring}.   
\begin{definition}
\label{def:deficiency}
For any body $S\subseteq I^2$ define its \emph{deficiency} (with respect to distributions $f_1,f_2$) to be
$$
\delta(S)\equiv \int_S h(\vecc x)\,d\vecc x - f_2(1)\int_{S_1}f_1(x_1)\,dx_1-f_1(1)\int_{S_2}f_2(x_2)\,dx_2,
$$
where $S_1$, $S_2$ denote $S$'s projections to the $x_1$ and $x_2$ axis, respectively.
\end{definition}
\begin{lemma}
\label{lemma:no_positive_def}
If the requirements of Theorem~\ref{th:characterization_main} hold, then no body $S\subseteq D_{1,2}$ has positive deficiency.
\end{lemma}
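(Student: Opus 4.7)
The aim is to show that for every $S\subseteq D_{1,2}$,
$$\int_S h(\vecc x)\,d\vecc x \le f_2(1)\int_{S_1} f_1(x_1)\,dx_1 + f_1(1)\int_{S_2} f_2(x_2)\,dx_2.$$
A useful starting point is that $(D_{1,2})_1=[x_1^*,1]$ and $(D_{1,2})_2=[x_2^*,1]$, so \eqref{eq:2slice_gen_bundle_region} already gives $\delta(D_{1,2})=0$. In other words, the total $h$-mass on $D_{1,2}$ exactly saturates the two marginal budgets on the axes, and the content of the lemma is that this global balance cannot be disrupted by restricting to a sub-body.

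The plan is to partition $S$ using the two curves $x_1=s_1(x_2)$ and $x_2=s_2(x_1)$. For the portion of $S$ with $x_1\ge s_1(x_2)$, each horizontal slice lies in $[s_1(x_2),1]$; the first identity in \eqref{eq:1slice_gen_integrals} then gives $\int h\,dx_1 \le f_1(1)f_2(x_2)$, so this part contributes at most $f_1(1)\int_{S_2} f_2$. Symmetrically, the portion with $x_2\ge s_2(x_1)$ contributes at most $f_2(1)\int_{S_1} f_1$ via vertical slicing. Provided $S$ is contained in the union of these two regions, the desired inequality follows by apportioning the overlap between the two budgets.

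The delicate case, and the one that genuinely uses Assumption~\ref{assume:upwards_def} rather than merely $h\ge 0$, is when $S$ enters the ``corner strip'' of $D_{1,2}$ where $x_1<s_1(x_2)$ and $x_2<s_2(x_1)$ simultaneously. There the single-direction slice bounds both overshoot their canonical intervals, so naive one-sided slicing cannot close the argument. My plan is to exploit the pointwise inequalities $h(\vecc x)\ge f_2(1)f_1(x_1)$ and $h(\vecc x)\ge f_1(1)f_2(x_2)$ on $D_{1,2}$, together with the slack in the two budgets left over from parts of $S$ that do not fully saturate the ``good'' slices, to reallocate marginal budget toward the strip. A concrete implementation would first reduce to the canonical upper-right rectangle $S=[a,1]\times[b,1]\cap D_{1,2}$ by a monotone rearrangement, for which a direct calculation from \eqref{eq:1slice_gen_integrals} yields $\delta([a,1]\times[b,1])=(1-F_1(a))(1-F_2(b))\bigl(1-\tfrac{af_1(a)}{1-F_1(a)}-\tfrac{bf_2(b)}{1-F_2(b)}\bigr)$, so non-positivity reduces to a pointwise hazard-rate inequality on $D_{1,2}$ that Assumption~\ref{assume:upwards_def} and the defining conditions on $s_1,s_2$ are engineered to produce.

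The main obstacle is precisely that corner strip: the two one-sided slicing budgets cannot be applied independently without double-counting, and the two axis marginals interact in a genuinely two-dimensional way. Making the reduction-to-rectangles step rigorous, or finding a direct accounting that simultaneously respects both marginals on the strip, is exactly the bookkeeping that the next Lemma~\ref{lemma:coloring} will handle through a max-flow/min-cut argument; a naive integration-by-parts or one-coordinate-at-a-time calculation seems insufficient, which is why this lemma (rather than just $h\ge 0$) is the place where the strengthened Assumption~\ref{assume:upwards_def} earns its keep.
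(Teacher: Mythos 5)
The statement you propose to prove is exactly what is needed, and you correctly identify both the useful starting fact $\delta(D_{1,2})=0$ and the place where one-sided slicing genuinely breaks down (what you call the corner strip). You also compute the box deficiency $\delta([a,1]\times[b,1])=(1-F_1(a))(1-F_2(b))\bigl(1-\tfrac{af_1(a)}{1-F_1(a)}-\tfrac{bf_2(b)}{1-F_2(b)}\bigr)$ correctly. However, what you actually write down is not a proof: you stop short of the monotone-rearrangement argument, you never verify the pointwise hazard-rate inequality $\tfrac{af_1(a)}{1-F_1(a)}+\tfrac{bf_2(b)}{1-F_2(b)}\ge 1$ on the relevant region, and — most seriously — you defer the ``bookkeeping'' to Lemma~\ref{lemma:coloring}, which is \emph{downstream} of the present lemma. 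Lemma~\ref{lemma:coloring} invokes Lemma~\ref{lemma:no_positive_def} as the input to its min-cut bound; it cannot be used to close the gap here without circularity.

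The paper's actual proof takes a more self-contained route and never needs the box formula. Starting from a hypothetical $S\subseteq D_{1,2}$ with $\delta(S)>0$, it uses the pointwise bound $h(\vecc x)\ge f_2(1)f_1(x_1)$ from Assumption~\ref{assume:upwards_def} to push each horizontal slice of $S$ to the right until it fills the available width of $D_{1,2}$ at that height, so that WLOG $S=[t_1,1]\times[t_2,1]\cap D_{1,2}$. It then observes that the lowest slice of $S$ must already contain $[s_1(t_2),1]$ (else that slice alone would have negative deficiency by \eqref{eq:1slice_gen_integrals}), and so one can glue on all segments $[s_1(x_2),1]$ for $x_2\in[x_2^*,t_2]$ without adding new $x_1$-projection and while only adding slices of nonnegative deficiency. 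Repeating in the other direction, $S$ can be taken to project onto the full $[x_1^*,1]\times[x_2^*,1]$ and, since $h\ge 0$, to equal all of $D_{1,2}$. Then $\delta(S)=\delta(D_{1,2})=0$ contradicts $\delta(S)>0$. No pointwise hazard-rate inequality enters; the boundary identities \eqref{eq:1slice_gen_integrals} and \eqref{eq:2slice_gen_bundle_region} do all the work.

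Your rectangle reduction could in principle be made into an alternative proof in the non-degenerate case ($x_1^*,x_2^*>0$): since $G_j$ is increasing and $G_1(x_1^*)+G_2(x_2^*)=4+H_1(x_1^*)+H_2(x_2^*)\ge 1$ by $h\ge 0$, the box formula is indeed nonpositive for $(a,b)\in[x_1^*,1]\times[x_2^*,1]$. But this needs (i) the same monotone rearrangement you only sketch, (ii) a further extension from $[t_1,1]\times[t_2,1]\cap D_{1,2}$ to the full box, and (iii) a separate treatment of the degenerate case $x_1^*=x_2^*=0$, where $G_1(0)+G_2(0)=0$ and the box criterion fails, so one must keep the intersection with $D$. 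As written, the proposal identifies the right landmarks but leaves the proof open at exactly the step that matters, and misattributes the missing step to the wrong (later) lemma.
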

\begin{proof}
To get to a contradiction, assume that there is body $S\subseteq D_{1,2}$ with $\delta(S)>0$. 
First, we'll show that without loss $S$ can be assumed to be upwards closed. Intuitively, we'll show that one can push mass of $S$ to the right or upwards, without reducing its deficiency. By Assumption~\ref{assume:upwards_def} function $h(\vecc x)-f_2(1)f_1(x_1)$ is nonnegative. Then, if there exists a nonempty horizontal line segment $\slice{S}{x_2}{t}$ of $S$ at some height $x_2=t$, then we can assume that this line segment fills the entire available horizontal space of $D_{1,2}$: if that was not the case, and there existed a small interval $[\alpha,\beta]\times{t}$ that was not in $S$, then we could add it to it, not increasing the projection towards the $x_2$-axis (it is already covered by the other existing points at $x_2=t$) and the projection towards the $x_1$-axis is increased at most by $\beta-\alpha$, leading to a change to the overall deficiency by at most $\int_{\alpha}^{\beta}h(\vecc x)\,dx_1-f_2(1)\int_{\alpha}^{\beta}f_1(x_1)\,dx_1$, which is nonnegative\footnote{We must mention here that the assumption of the nonnegativity of $h(\vecc x) -f_2(1)f_1(x_1)$ could be replaced by that of $h(\vecc x)-f_2(1)f_1(x_1)$ being increasing with respect to $x_1$ and the argument would still carry through: we can move entire columns  of $S$ to the right, pushing elements horizontally; the projection towards axis $x_2$ again remains unchanged, and because of the monotonicity of $h(\vecc x)-f_2(1)f_1(x_1)$, the overall deficiency will not decrease since we are integrating over higher values of $x_1$.

This means that the monotonicity of $h(\vecc x)-f_j(x_j)j_{-j}(1)$ with respect to $x_j$ can replace its nonnegativity in the initial Assumption~\ref{assume:upwards_def} (while still maintaining the regularity requirement of $h(\vecc x)$ being nonnegative) without affecting the main results of this paper, namely Theorems \ref{th:characterization_main}, \ref{th:characterization_iid} and \ref{th:two_iid_approx}.
\label{foot:alter-assumption-monotone}
}.

So $S$ can be assumed to be the intersection of $D_{1,2}$ with a box, i.e.~$S=[t_1,1]\times[t_2,1]\inters D_{1,2}$, where $t_1\geq x_1^*$ and $t_2\geq x_2^*$. This also means that its projections are $S_1=[t_1,1]$ and $S_2=[t_2,1]$.
Now consider the lowest horizontal slice $\slice{S}{x_2}{t_2}$ of $S$. It obviously lies within $D_{1,2}$. But from condition~\eqref{eq:1slice_gen_integrals} so do all horizontal line segments of the form $[s_1(x_2),1]$ for any $x_2\in[x_2^*, t_2]$: $s_1(x_2)$ is decreasing and specifically less steeply than the line $-x_2+p$ which is the boundary of $D_{1,2}$. So, by adding all these segments to $S$ we won't increase the projections towards the $x_1$-axis (these are covered already by $\slice{S}{x_2}{t_2}$, which has to be a superset of $[s_1(t_2),1]$, otherwise it would have a negative deficiency, see~\eqref{eq:1slice_gen_integrals}) and the new projections towards the $x_2$-axis are dominated by the increase of the area of $S$ (this segments have nonnegative deficiency). So, $S$ can be assumed to project in the entire boundaries $[x_1^*,1]$ and $[x_2^*,1]$ of $D_{1,2}$ and thus, since $h$ is nonnegative, $S$ can be assumed to fill the entire $D_{1,2}$ region. But by the definition of price $p$ in Theorem~\ref{th:characterization_main}, $\delta(D_{1,2})=0$ which concludes the proof. 
\end{proof}
\subsection{Dual Solution and Optimality}
\label{sec:optimality}
Notice that Theorem~\ref{th:characterization_main} ensures the existence of a full-bundling price in~\eqref{eq:2slice_gen}. This needs to be proven. Indeed,
quantity $\int_Dh(\vecc x)\,d\vecc x$ continuously (weakly) increases as $p$ decreases, and for $p=0$ %
\begin{align*}
\int_Dh(\vecc x)\,d\vecc x &=\int_0^1\int_0^1 3f_1(x_1)f_2(x_2)+x_1f_1'(x_1)f_2(x_2)+yf_2'(x_2)f_1(x_1)\,dx_1\,dx_2\\
		&=3+(f_1(1)-1)+(f_2(1)-1)=1+f_1(1)+f_2(1)
		>f_1(1)+f_2(1)
\end{align*}
while for $p=\hat x_1+\hat x_2$, where $(\hat x_1,\hat x_2)$ is the unique point of intersection of the curves $x_2=s_1(x_1)$ and $x_1=s_2(x_2)$ in $I^2$ (such a point certainly exists because $s_1$ and $s_2$ are defined over the entire $I$), 
\begin{align*}
\int_{D_{1,2}}h(\vecc x)\,d\vecc x &=\int_{\hat x_2}^1\int_{\hat x_1}^1 h(\vecc x)\,d\vecc x
		\leq \int_{\hat x_2}^1\int_{s_1(x_2)}^1 h(\vecc x)\,d\vecc x
		= \int_{\hat x_2}^1f_1(1)f_2(x_2)\,dx_2\\
		&= f_1(1)(1-F_2(\hat x_2))
		\leq f_1(1)(1-F_2(\hat x_2))+f_2(1)(1-F_1(\hat x_1)),
\end{align*}
the first inequality holding because $h$ is nonnegative and $s_1(x_2)\leq s_1(\hat x_2)=\hat x_1$ ($s_1$ is decreasing), and the second equality by substituting~\eqref{eq:1slice_gen_integrals}, and from~\eqref{eq:2slice_gen_bundle_region} this means that $\int_{D}h\,d\vecc x\leq f_1(1)+f_2(1)$.

Combining the above, indeed there must be a $p\in[0,\hat x_1+\hat x_2]$ such that $\int_{D}h\,d\vecc x=f_1(1)+f_2(1)$. In fact, using this argument, if for $p=\min\sset{s_1(0),s_2(0)}$ it is $\int_{D}h\,d\vecc x<f_1(1)+f_2(1)$ then $p$ must go below this value to get a solution, meaning that the full-bundling region will cover the rest of the regions $D_1$ and $D_2$, i.e.~$D=D_{1,2}$, and the mechanism defined by~\eqref{eq:optimal_auction_gen} is a deterministic full-bundling.  

The following lemma will complete the proof of Theorem~\ref{th:characterization_main}. It is the most technical part of this paper, and utilizes a max-flow min-cut argument in order to prove the existence of a feasible dual pair $z_1,z_2$ that satisfies the complementarity conditions with respect to the utility function given by Theorem~\ref{th:characterization_main}, thus establishing optimality. It is inspired by the bipartite matching approach in~\citep{gk2014} where Hall's theorem is used in order to prove existence, in the special case of uniformly distributed items. Here we need to abstract and generalize our approach in order to incorporate general distributions in the most smooth way possible. The proof has a strong geometric flavor, which is achieved by utilizing the notion of deficiency that was introduced in Sect.~\ref{sec:nodef} and using Lemma~\ref{lemma:no_positive_def}.  
\begin{lemma}
\label{lemma:coloring}
Assume that the conditions of Theorem~\ref{th:characterization_main} hold. Then for arbitrary small $\varepsilon>0$, there exist feasible dual solutions $z_1,z_2$ which are $\varepsilon$-complementary to the (primal) $u$ given by~\eqref{eq:optimal_auction_gen}. Therefore, the mechanism induced by $u$ is optimal.
\end{lemma}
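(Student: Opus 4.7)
The plan is to construct duals $z_1,z_2$ by producing the decomposition $w_1,w_2$ described in Section~\ref{sec:duality} and then setting $z_j(\vecc x)=\int_0^{x_j}w_j(t,x_{-j})\,dt$. Outside $D_{1,2}$ the assignment is forced: $w_1=w_2=0$ on $\bar D$, $w_1=h,\ w_2=0$ on $D_1$, and $w_1=0,\ w_2=h$ on $D_2$, and by~\eqref{eq:1slice_gen_integrals} these already saturate the row-target $f_1(1)f_2(x_2)$ for rows at height $x_2\le x_2^*$ and the column-target $f_2(1)f_1(x_1)$ for columns at $x_1\le x_1^*$. What remains is to split $h$ into $w_1,w_2\ge 0$ inside $D_{1,2}$ so that each row at $x_2\ge x_2^*$ carries $w_1$-mass exactly $f_1(1)f_2(x_2)$ and each column at $x_1\ge x_1^*$ carries $w_2$-mass exactly $f_2(1)f_1(x_1)$; by~\eqref{eq:2slice_gen_bundle_region} the total supply $\int_{D_{1,2}}h$ matches the total demand so the problem is at least globally balanced.

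To build this interior decomposition I would discretize $D_{1,2}$ into an $N\times N$ grid of small cells and encode the problem as a flow network. A source $s$ feeds a row-node $R_i$ with capacity $b_i=f_1(1)\int_{\mathrm{strip}_i}f_2$ and a column-node $C_j$ with capacity $c_j=f_2(1)\int_{\mathrm{strip}_j}f_1$; each $R_i$ and $C_j$ connects with infinite capacity to every cell-node $A_{ij}$ it meets inside $D_{1,2}$; and each $A_{ij}$ drains to a sink $t$ with capacity $a_{ij}=\int_{\mathrm{cell}_{ij}}h\,d\vecc x$. A flow that saturates $t$ yields, for every cell, values $w_1^{ij}$ (mass routed through $R_i$) and $w_2^{ij}$ (through $C_j$) with $w_1^{ij}+w_2^{ij}=a_{ij}$, $\sum_j w_1^{ij}=b_i$ and $\sum_i w_2^{ij}=c_j$, which is the discrete analogue of the desired decomposition.

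By max-flow min-cut such a saturating flow exists iff, for every union $S$ of cells, $\sum_{(i,j)\in S}a_{ij}\le \sum_{j:C_j\text{ meets }S}c_j+\sum_{i:R_i\text{ meets }S}b_i$. Unpacking the capacities, this is exactly the inequality $\delta(S)\le 0$ of Definition~\ref{def:deficiency}, since the projections $S_1,S_2$ are precisely the unions of column and row strips meeting $S$. Because every such $S$ is a body inside $D_{1,2}$, Lemma~\ref{lemma:no_positive_def} delivers the inequality, and a saturating flow exists. Spreading the cell values as piecewise-constant densities gives $w_1,w_2\ge 0$ on $I^2$, and the induced $z_1,z_2$ trivially satisfy the dual constraints \eqref{eq:dual_const_1}--\eqref{eq:dual_const_3}.

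It remains to check the three conditions of Lemma~\ref{lemma:complementarity}. Equation~\eqref{eq:e_compl_2} holds because $\partial z_1/\partial x_1+\partial z_2/\partial x_2=w_1+w_2$ equals $h$ on $D$ and $0$ on $\bar D$ by construction; \eqref{eq:e_compl_3} holds because every row and column integral of $w_j$ equals $f_j(1)f_{-j}(x_{-j})$; and \eqref{eq:e_compl_4} holds because inside $D_{1,2}$ both $\partial u/\partial x_j=1$, while on $D_1\cup D_2\cup\bar D$ a short geometric check using the corner coordinates $(x_1^*,p-x_1^*)$ and $(p-x_2^*,x_2^*)$ of the bundle region shows that the horizontal or vertical segment from the axis to $\vecc x$ never enters the region where $w_j$ can be nonzero, forcing $z_j(\vecc x)=0$ exactly on the points where $\partial u/\partial x_j<1$. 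The hard part of all this is the last bookkeeping step: cells sliced by the boundary curves $x_j=s_j(x_{-j})$ and $x_1+x_2=p$ are not rectangles, so one must argue uniformly that both the discrete deficiency inequality on such truncated-cell unions still follows from Lemma~\ref{lemma:no_positive_def} and that the resulting piecewise-constant densities produce row/column integrals within $O(1/N)$ of their continuous targets; picking $N$ large enough then absorbs all discretization slack into the prescribed $\varepsilon$.
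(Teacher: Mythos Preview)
Your plan is essentially the paper's own argument: discretize $D_{1,2}$, set up a transportation network whose min-cut condition is precisely the deficiency inequality of Lemma~\ref{lemma:no_positive_def}, extract the decomposition $w_1+w_2=h$ from a saturating flow, and absorb the grid error into the $\varepsilon$ of Lemma~\ref{lemma:complementarity}. Your network is the mirror image of the paper's (source feeding rows/columns instead of cells), but that is purely cosmetic; the Hall-type condition you write down is the right one and Lemma~\ref{lemma:no_positive_def} supplies it.

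There is one genuine slip. ``Spreading the cell values as piecewise-constant densities'' makes $w_1+w_2$ equal to the cell average $\bar h_{ij}$ of $h$, not to $h$ itself. At points of the cell where $h(\vecc x)<\bar h_{ij}$ you then have $\partial z_1/\partial x_1+\partial z_2/\partial x_2>h(\vecc x)$, so the pair $z_1,z_2$ is \emph{not} dual-feasible (constraint~\eqref{eq:dual_const_3} fails), and your later assertion that \eqref{eq:e_compl_2} holds because $w_1+w_2=h$ on $D$ is likewise unjustified. The $\varepsilon$-slack in Lemma~\ref{lemma:complementarity} does not help here, because that lemma presupposes exact dual feasibility. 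The paper avoids this by making the \emph{ratio} constant on each cell rather than the density: set
\[
w_j(\vecc x)\;=\;\frac{\phi_j(i,j)}{\phi(i,j)}\,h(\vecc x)\qquad\text{for }\vecc x\in B_{i,j},
\]
where $\phi_j(i,j)$ is the flow routed through the row/column node and $\phi(i,j)=\int_{B_{i,j}}h$ is the cell mass. Then $w_1+w_2=h$ identically, dual feasibility and \eqref{eq:e_compl_2} hold with equality, and only the boundary conditions~\eqref{eq:e_compl_3} carry an $O(1/N)$ error, exactly as you already anticipate in your last paragraph. With this one change your proof goes through and coincides with the paper's.
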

\begin{proof}
Following the discussion in Sect.~\ref{sec:duality}, we would like to decompose $W$ into the desired functions $w_1$ and $w_2$ within $D_{1,2}$, i.e.~such that they satisfy~\eqref{eq:Wdecomp_sum}--\eqref{eq:Wdecomp1}. In fact, we are aiming for $\varepsilon$-complementarity, so we can relax conditions~\eqref{eq:Wdecomp1} a bit: 
\begin{equation}
\int_0^1w_j(\vecc x)\,dx_j \leq f_j(1)f_{-j}(x_{-j})+\varepsilon' 
\label{eq:relax_dual_boundary}
\end{equation}
To be precise, the $\varepsilon$-complementarity of Lemma~\ref{lemma:complementarity} dictates that regarding these conditions we must show that for a.e.\ $\vecc x\in D_{1,2}$ property~\eqref{eq:e_compl_3} holds (conditions~\eqref{eq:e_compl_2} and~\eqref{eq:e_compl_4} are immediately satisfied with strong equality, by~\eqref{eq:prop_dual_1} and the fact that within $D_{1,2}$ both items are sold deterministically with probability $1$.).
But since $u(\vecc x)\leq x_1+x_2 \leq 2$ for all $x_1,x_2\in I$ ($u$'s derivatives are at most $1$ with respect to any direction) and also exists $M>0$ such that $f_1(1)f_2(x_2),f_2(1)f_1(x_1)\geq M$ for all $\vecc x\in D_{1,2}$ (the density functions are continuous over the closed interval $I$ and positive\footnote{We would like to note here that this is the only point in the paper where the fact that the densities are \emph{strictly} positive is used. As a matter of fact, a closer look will reveal that the proof just needs the property to hold in the closure of $D_{1,2}$ and not necessarily in the entire domain $I^2$. This allows the consideration of a wider family of feasible distributional priors, for example the monomial distributions of Corollary~\ref{th:optimal_two_power}: their densities $f(t)=(c+1)t^c$ may vanish at $t=0$ but these ``problematic'' points happen to lie outside the area $D_{1,2}$ where both items are sold.}), indeed~\eqref{eq:relax_dual_boundary} is enough to guarantee $\varepsilon$ complementarity if one ensures $\varepsilon'\leq \varepsilon M/2$. So, the remaining of the proof is dedicated into constructing nonnegative, a.e.~continuous functions $w_1$ and $w_2$ over $D_{1,2}$, such that $w_1+w_2=h$ and~\eqref{eq:relax_dual_boundary} are satisfied.

We will do that by constructing an appropriate graph and recovering $w_1$ and $w_2$ as ``flows'' through its nodes, deploying the min-cut max-flow theorem to prove existence. To start, we pick an arbitrary small $\delta>0$ and discretize $I^2$ into a lattice of $\delta$-size boxes $[(i-1)\delta,i\delta] \times [(j-1)\delta,j\delta]$, where $i,j=1,2,\dots,1/\delta$, selecting $\delta$ such that $1/\delta$ is an integer. Denote the intersection of such a box with $D_{1,2}$ by $B_{i,j}$. Also, let $B^1_i$ denote the projection of all nonempty $B_{i,j}$'s, as $j$ ranges, towards the $x_1$-axis and $B^2_j$ towards the $x_2$-axis, as $i$ ranges. Note that these are well-defined in this way, since by the geometry of region $D_{1,2}$ two nonempty $B_{i,j}$, $B_{i',j'}$ will have the same vertical projection if $i=i'$ and the same horizontal if $j=j'$. Also, it is a simple fact to observe that all $B^1_i$ and $B^2_j$ are single-dimensional real intervals of length at most $\delta$.

Now let's construct a directed graph $G=(V,E)$, together with a capacity function $c(e)$ for all edges $e\in E$. Initially, for any pair $(i,j)$ such that $B_{i,j}$ has positive (two-dimensional Lebesgue) measure we insert a node $v(i,j)$ in $V$. We'll call these nodes \emph{internal} and we'll denote them by $V_o$. Also, for any internal node $v(i,j)$ we add nodes $v_1(i)$ and $v_2(j)$ corresponding to entire columns and rows, calling them \emph{column} and \emph{row} vertices and denoting them by $V_1$ and $V_2$, respectively. Finally there are two special nodes, a source $\sigma$ and a destination $\tau$. From the source to all internal nodes $v=v(i,j)$ we add an edge $(\sigma,v)$ with capacity equal to the area of $B_{i,j}$ under $h$, i.e.~$c(\sigma,v)=\int_{B_{i,j}}h(\vecc x)\,d\vecc x$. From any internal node $v=v(i,j)$ to its external column and row nodes $v_1=v_1(i)$ and $v_2=v_2(j)$ we add edges with capacities $c(v,v_1)=c(v,v_2)=c(\sigma,v)$ equal to the internal node's incoming edge capacity from the source. Finally, for all external nodes $v_1(i)\in V_1$ and $v_2(j)\in V_2$ we add edges towards the destination $\tau$ with capacities $c(v_1,\tau)=f_2(1)\int_{B^1_i}f_1(x_1)\,dx_1$ and $c(v_2,\tau)=f_1(1)\int_{B^2_j}f_2(x_2)\,dx_2$, respectively. The structure of graph $G$ is depicted in Fig.~\ref{fig:flows_graph}.
 \begin{figure}
 \centering
 \includegraphics[width=10cm]{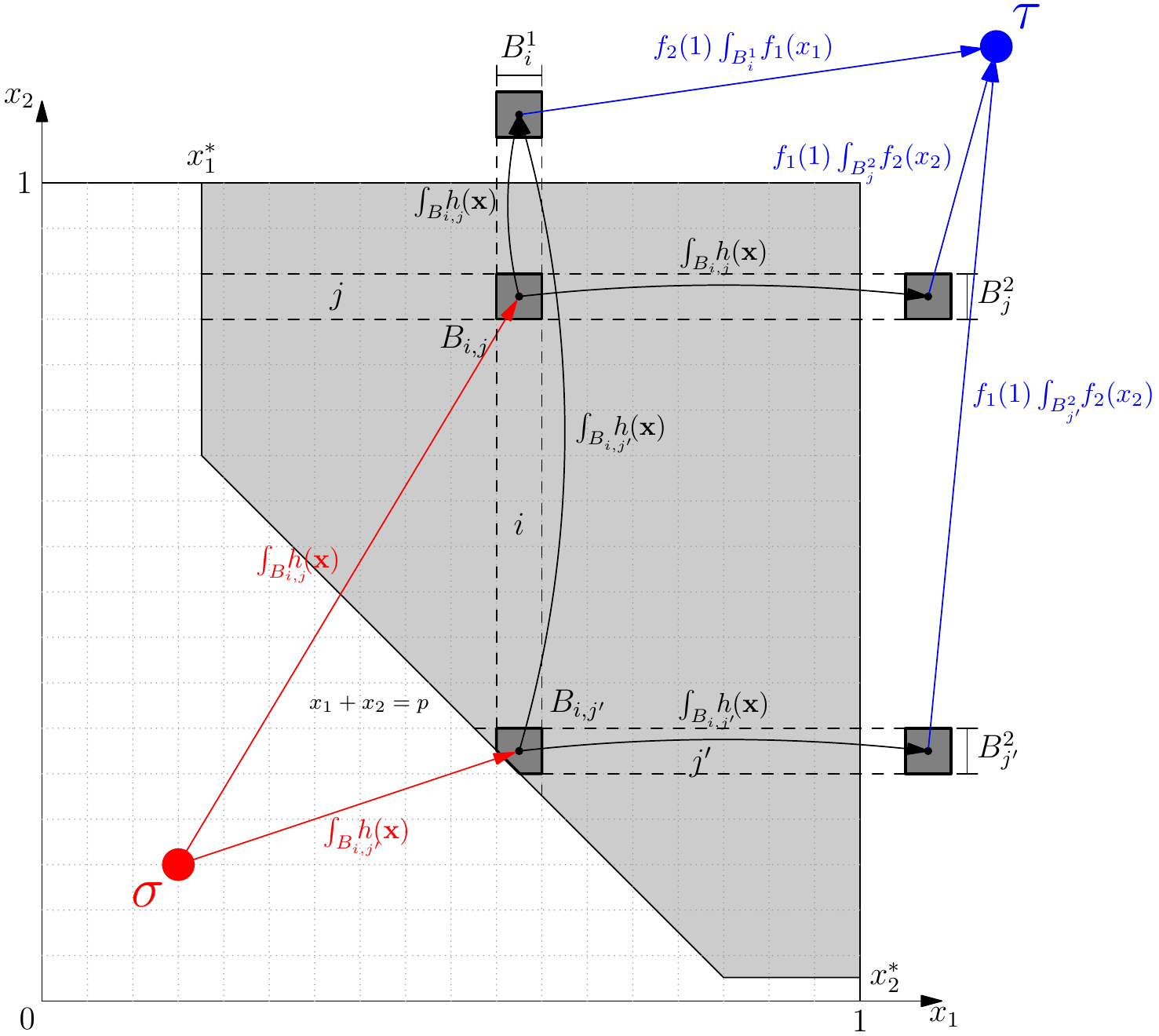}
 \caption{\footnotesize The graph $G$ in the proof of Lemma~\ref{lemma:coloring}. Every internal node $B_{i,j}$ of region $D_{1,2}$ can receive at most $\int_{B_{i,j}}h(\vecc x)\,d\vecc x$ flow from the source node $\sigma$ and can send at most that amount to each one of its neighbouring external nodes $B_{i}^1$ and $B_{j}^2$. Every external node $B^1_i$ and $B^2_j$ is connected to the destination $\tau$ with edges of capacity $f_2(1)\int_{B_i^1}f_1(x_1)\,dx_1$ and $f_1(1)\int_{B_j^2}f_2(x_2)\,dx_2$, respectively. Internal $B_{i,j}$'s are two-dimensional intersections of $\delta$-boxes with $D_{1,2}$, while the external ones, $B^1_i$ and $B^2_j$ are single dimensional intervals of length $\delta$.}
 \label{fig:flows_graph}
 \end{figure}

As a first observation, notice that the maximum flow that can be sent from $\sigma$ within the graph is $\int_{D_{1,2}}h(\vecc x)\,dx_1\,dx_2$ and the maximum flow that $\tau$ can receive is 
$$f_2(1)\int_{x_1^*}^1f_1(x_1)\,dx_1+f_1(1)\int_{x_2^*}^1f_2(x_2)\,dx_2$$ (remember that the projection of $D_{1,2}$ to the $x_1$-axis is $[x_1^*,1]$ and to the $x_2$-axis $[x_2^*,1]$). But, from the way the entire region $D$ is constructed, we know that the above two quantities are equal (see~\eqref{eq:2slice_gen_bundle_region}). Let's denote this value by $\psi$. Next, we will prove that indeed one can create a feasible flow through $G$ that achieves that maximum value $\psi$. From the max-flow min-cut theorem, it is enough to show that the minimum $(\sigma,\tau)$-cut of $G$ has a value of at least $\psi$. To do that, we'll show that $(\sigma,V\setminus\{\sigma\})$ is a minimum cut of $G$.

Indeed, let $(S,V\setminus S)$ be a $(\sigma,\tau)$-cut of $G$. First, let there be an edge $(v,v_j)$ crossing the cut, i.e.~$v\in S$ and $v_j\notin S$, with $v$ internal node and $v_j$ external. Then, by moving $v$ at the other side of the cut, i.e.~removing it from $S$, we would create at most a new edge contributing to the cut, namely $(\sigma,v)$ but also destroy at least one edge $(v,v_j)$. Since the capacities of these two edges are the same, the overall effect would be to get a new cut with weakly smaller value. So, from now on we can assume that for all edges $(v,v_j)$ of $G$, if $v\in S$ then also $v_j\in S$. Under this assumption, if $S_{o}=V_{o}\inters S$ denotes the set of internal nodes belonging at the left side of the cut, for every $v\in S_{o}$ all edges $(v,v_j)$ adjacent to $v$ will not cross the cut. However, this means that all edges $(v_j,\tau)$, where $v_j\in N(v)$\footnote{$N(v)$ denotes the set of neighbours of $v$ in graph $G$.}, do contribute to the cut. But then, if we remove all nodes in $S_{o}$, together with their neighbouring external nodes $N(S_{o})$ at the other side of the cut, we increase the cut's value by at most $\sum_{v\in S_{o}}c(\sigma,v)$ and at the same time reduce it by at least $\sum_{v_j\in N(S_{o})}c(v_j,\tau)$. However, by the way graph $G$ is constructed, this corresponds to an overall increase in the cut of at least 
$$\int_B h(\vecc x)\,d\vecc x -f_2(1) \int_{B_{1}}f_1(x_1)\,dx_1 -f_1(1)\int_{B_{2}}f_2(x_2)\,dx_2,$$ 
where $B=\union_{v(i,j)\in S_{o}}B_{i,j}$ is the region of $D_{1,2}$ covered by the boxes of nodes in $S_{o}$ and $B_1$, $B_2$ are the projections of this body to the horizontal and vertical axis, respectively. From Lemma~\ref{lemma:no_positive_def} this difference must be nonpositive, thus this change results in a cut of an even (weakly) smaller value. The above arguments show that indeed the cut that has only $\sigma$ remaining at its left side is a minimum one.

So, there must be a flow $\phi:E\map\R_+$, achieving to transfer a total value of $\psi$ through $G$. As we argued above though, by the construction of $G$, in order to achieve this value of $\psi$ the full capacity of \emph{all} edges $(\sigma, v)$ as well as that of all $(v_j,\tau)$ must be used. So, this flow $f$ manages to elegantly separate all incoming flow $\phi(\sigma,v(i,j))=\int_{B_{i,j}}h(\vecc x)\,d\vecc x$ towards an internal box of $D_{1,2}$, into a sum of flows $\phi(v(i,j),v_1(i))+\phi(v(i,j),v_2(j))$ towards its external neighbours. But this is exactly what we need in order to construct our feasible dual solution! For simplicity, denote this incoming flow $\phi(i,j)$ and the outgoing ones $\phi_1(i,j)$ and $\phi_2(i,j)$, respectively. Then, define the functions $w_1$, $w_2$ throughout $D_{1,2}$ by
$$
w_1(\vecc x)=\frac{\phi_1(i,j)}{\phi(i,j)}h(\vecc x)\qquad\text{and}\qquad w_2(\vecc x)=\frac{\phi_2(i,j)}{\phi(i,j)}h(\vecc x),
$$
where $B_{i,j}$ is the discretization box where point $\vecc x$ of $D_{1,2}$ belongs to. In that way, first notice that we achieve $w_1+w_2=h$. Secondly, functions $w_1$ and $w_2$ are almost everywhere continuous, since the values of the flows are constant within the boxes, and our discretization is finite. The only remaining property to prove is~\eqref{eq:relax_dual_boundary}. 

Fix some height $x_2=\tilde x_2$ such that this horizontal line intersects $D_{1,2}$. We'll prove that 
$$\int_{0}^{1}w_1(x_1,\tilde x_2)\,dx_1-f_1(1)f_2(\tilde x_2)\leq \varepsilon'.$$ 
Value $\tilde x_2$ falls within some interval of the discretization, let $\tilde x_2\in[(\tilde j-1)\delta,\tilde j\delta]=B_{\tilde j}^2$. The average value of function $f_1(1)f_2(x_2)$ (with respect to $x_2$) within this interval is $$\frac{1}{\delta}f_1(1)\int_{B^2_{\tilde j}}f_2(x_2)\,dx_2=c(v_2(\tilde j),\tau)/\delta$$ and the average value of $\int_0^1w_1(\vecc x)\,dx_1$ is 
$$
\frac{1}{\delta}\int_{B^2_{\tilde j}}\int_0^1w_1(\vecc x)\,dx_1 = \frac{1}{\delta}\sum_i\int_{B_{i,\tilde j}}w_1(\vecc x)\,d\vecc x  = \frac{1}{\delta}\sum_i\frac{\phi_1(i,j)}{\phi(i,j)}\int_{B_{i,\tilde j}}h(\vecc x)\,d\vecc x 
= \sum_i \phi_1(i,\tilde j)/\delta.
$$
But since the sum of the outgoing flows over any horizontal line of internal nodes of the graph (here $j=\tilde j$) must equal the outgoing flow of the corresponding external node (here $v_2(\tilde j)$), the above quantities are equal. Thus, by selecting the discretization parameter $\delta$ small enough, we can indeed make the values $\int_{0}^{1}w_1(x_1,\tilde x_2)\,dx_1$ and $f_1(1)f_2(\tilde x_2)$ to be $\varepsilon'$ close to each other
\footnote{This should feel intuitively clear, and it relies on the uniform continuity of functions $f_2$ and $h$, but we also give a formal proof in Appendix~\ref{append:technical_flow_rest}.}.
\end{proof}

\section{The Case of Identical Items}
In this section we focus on the case of identically distributed values, i.e.~$f_1(t)=f_2(t)\equiv f(t)$ for all $t\in I$, and we provide clear and simple conditions under which the critical property~\eqref{eq:1slice_gen_functions} of Theorem~\ref{th:characterization_main} holds.  

First notice that in this case the regularity Assumption~\ref{assume:regularity} gives $3+\frac{x_1f'(x_1)}{f(x_1)}+\frac{x_2f'(x_2)}{f(x_2)}\geq 0$ a.e. in $I^2$ (since $f$ is positive) and thus $\frac{tf'(t)}{f(t)}\geq -\frac{3}{2}$ for a.e. $t\in I$. An equivalent way of writing this is that $t^{3/2}f(t)$ is increasing, which interestingly is the complementary case of that studied by~\citet{Hart:2012uq} for two i.i.d. items: they show that when $t^{3/2}f(t)$ is decreasing, then deterministically selling in a full bundle is optimal.

\begin{theorem}
\label{th:characterization_iid}
Assume that $G(t)=tf(t)/(1-F(t))$ and $H(t)=tf'(t)/f(t)$ give rise to well defined, differentiable functions over $I$, $G$ being strictly increasing and convex, $H$ decreasing and concave, with $G+H$ increasing and $G(1)\geq 2+H(0)$. Then the requirements of Theorem~\ref{th:characterization_main} are satisfied. In particular 
$$s(t)=G^{-1}(2+H(t))$$
and, if  
\begin{equation}
\label{eq:two_iid_full_bundle_price}
\int_0^1\int_0^1h(\vecc x)\,d\vecc x-\int_0^p\int_0^{p-x_2}h(\vecc x)\,d\vecc x-2f(1)
\end{equation}
is nonpositive for $p=s(0)$ then the optimal selling mechanism is the one offering deterministically the full bundle for a price of $p$ being the root of \eqref{eq:two_iid_full_bundle_price} in $[0,s(0)]$, otherwise the optimal mechanism is the one defined by the utility function
$$
u(\vecc x)=\max\sset{0,x_1-s(x_2),x_2-s(x_1),x_1+x_2-p}
$$
with $p=x^*+s(x^*)$, where $x^*\in [0,s(0)]$ is the constant we get by solving 
\begin{equation}
\label{eq:price_bundle_eq_iid}
\int_{x^*}^{s(x^*)}\int_{s(x^*)+x^*-x_2}^1h(\vecc x)\,d\vecc x+\int_{s(x^*)}^1\int_{x^*}^1h(\vecc x)\,d\vecc x= 2f(1)(1-F(x^*)).
\end{equation}
\end{theorem}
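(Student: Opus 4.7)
The plan is to verify the hypotheses of Theorem~\ref{th:characterization_main} under the stated conditions on $G$ and $H$, and then invoke that theorem directly. First I would specialize equation~\eqref{eq:1slice_gen_functions} to the i.i.d.\ setting $f_1=f_2=f$: both halves of the equation collapse to the single scalar relation $G(s(t))=2+H(t)$, and since $G$ is strictly increasing this inverts to the claimed closed form $s(t)=G^{-1}(2+H(t))$. I would then verify $s:I\to I$: the upper bound $2+H(t)\leq G(1)$ uses that $H$ is decreasing (so $H(t)\leq H(0)$) together with the hypothesis $G(1)\geq 2+H(0)$, while the lower bound $2+H(t)\geq 0=G(0)$ follows from the regularity inequality $H(t)\geq -3/2$ noted in the remark preceding the theorem.

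Next I would establish monotonicity and concavity of $s$. Writing $s=G^{-1}\circ(2+H)$ as the composition of the increasing concave function $G^{-1}$ (since $G$ is strictly increasing and convex) with the decreasing concave function $2+H$, the standard composition rules give that $s$ is decreasing and concave; explicitly, $s''(t)=(G^{-1})''(2+H(t))(H'(t))^2+(G^{-1})'(2+H(t))H''(t)\leq 0$ since $(G^{-1})''\leq 0$, $(G^{-1})'\geq 0$, and $H''\leq 0$. Because $s$ is concave, $s'$ is decreasing on $I$, so in order to conclude $s'(t)>-1$ throughout $I$ it suffices to verify the endpoint inequality $s'(1)>-1$.

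Establishing $s'(t)>-1$, equivalently $G'(s(t))+H'(t)>0$, is the principal delicate step and the main obstacle I anticipate. The hypothesis that $G+H$ is (strictly) increasing yields $G'(\tau)+H'(\tau)>0$ for every $\tau\in I$. I would split on the position of $s(t)$ relative to $t$: in the case $s(t)\geq t$ the convexity of $G$ gives $G'(s(t))\geq G'(t)>-H'(t)$ at once, while in the harder case $s(t)<t$ (which must occur for $t$ near $1$ since $G(1)$ typically diverges while $s(1)\in I$) I would invoke the density-based identity $G'(t)=G(t)\bigl(G(t)+1+H(t)\bigr)/t$, obtained by differentiating $G(t)(1-F(t))=tf(t)$ and substituting $tf'(t)=H(t)f(t)$. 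This rewrites $G'(s(t))=(2+H(t))\bigl(3+H(t)+H(s(t))\bigr)/s(t)$, after which combining the regularity bound $H\geq -3/2$ with the concavity of $H$ extracts the required positivity of $G'(s(t))+H'(t)$. This identity is what ties the ``abstract'' conditions on $G$ and $H$ back to the common underlying density $f$.

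Finally I would handle the price $p$. The existence argument of Section~\ref{sec:optimality} already guarantees a $p\in[0,2]$ satisfying~\eqref{eq:2slice_gen}. In the i.i.d.\ case, the symmetry $f_1=f_2$ forces $x_1^*=x_2^*=x^*$ and $p=x^*+s(x^*)$; substituting into the equivalent form~\eqref{eq:2slice_gen_bundle_region} and splitting $D_{1,2}$ into the strip $\{(x_1,x_2):x^*\leq x_2\leq s(x^*),\ x^*+s(x^*)-x_2\leq x_1\leq 1\}$ and the rectangle $[x^*,1]\times[s(x^*),1]$ produces exactly equation~\eqref{eq:price_bundle_eq_iid}. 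The pure-bundling alternative is detected by the sign of~\eqref{eq:two_iid_full_bundle_price} at $p=s(0)$: if nonpositive, the curves $x_1=s(x_2)$ and $x_2=s(x_1)$ lie inside $\{x_1+x_2\geq p\}$ and the last sentence of Theorem~\ref{th:characterization_main} collapses $u$ to pure full-bundling at the unique root of~\eqref{eq:two_iid_full_bundle_price} in $[0,s(0)]$. With all hypotheses verified, Theorem~\ref{th:characterization_main} then delivers optimality of the claimed mechanism.
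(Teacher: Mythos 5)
Your overall plan is the right one and much of it matches the paper's argument: you derive the closed form $s=G^{-1}(2+H)$ from \eqref{eq:1slice_gen_functions}, verify the range of $2+H$ against $[G(0),G(1)]$ using the regularity bound $H\geq -3/2$, and prove concavity by composition (the paper does it by differentiating the quotient $s'(t)=H'(t)/G'(s(t))$ and noting the numerator is negative decreasing and the denominator positive decreasing, but your composition-of-concave-functions argument is an equally valid route to the same conclusion). Your treatment of the bundle price $p$, the decomposition of $D_{1,2}$, and the detection of the full-bundling regime are also in line with the text.

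The gap is in the verification that $s'(t)>-1$, which you yourself flag as ``the principal delicate step.'' You try to prove it on all of $I$, splitting into $s(t)\geq t$ and $s(t)<t$, and for the hard case $s(t)<t$ you propose the identity $G'(t)=G(t)\bigl(G(t)+1+H(t)\bigr)/t$ and then wave at ``combining the regularity bound $H\geq -3/2$ with the concavity of $H$.'' That sketch does not close: from $H\geq -3/2$ you only get $3+H(t)+H(s(t))\geq 0$, which is a vacuous lower bound, and nothing in the stated hypotheses controls $-H'(t)$ from above, so the comparison against $G'(s(t))$ cannot be extracted this way. The paper avoids the hard case altogether by a symmetry observation you seem to have missed: in the i.i.d.\ setting $s_1=s_2=s$, the curves $x_1=s(x_2)$ and $x_2=s(x_1)$ meet on the diagonal, and the only portion of the curve $x_1=s(x_2)$ that enters the construction of the mechanism (the boundary of $D_1$, for $x_2\in[0,x_2^*]$) lies entirely in the range $t\leq s(t)$. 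On that range $G$ convex and $s(t)\geq t$ give $G'(s(t))\geq G'(t)$, hence $s'(t)=H'(t)/G'(s(t))\geq H'(t)/G'(t)\geq -1$ directly from $(G+H)'\geq 0$ --- exactly your ``easy case.'' So the fix is not to complete the hard case but to recognize that it never arises; without that observation your proof of the slope condition is incomplete, and it is not clear the stronger statement (slope $>-1$ on all of $I$) even follows from the hypotheses as stated.
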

\begin{proof}
Function $G$ is strictly monotone, thus invertible and has a range of $[G(0),G(1)]=[0,G(1)]\supseteq [0,2+H(0)]$. By Assumption~\ref{assume:regularity} and the previous discussion, it must be $tf'(t)/f(t)\geq -3/2$, so $2+H(t)\geq 1/2>0$ for all $t\in I$. Thus, $s(t)=G^{-1}(2+H(t))$ is well defined and furthermore it is decreasing, since $G$ is increasing and $H$ decreasing. Also, by the way $s$ is defined we get that for all $t$: $G(s(t))=2+H(t)$, which is exactly condition~\eqref{eq:1slice_gen_functions} of Theorem~\ref{th:characterization_main}. 

It remains to be shown that $s$ is concave and that $s'(t)>-1$. From the definition of $s$, $s'(t)=H'(t)/G'(s(t))$. Function $H$ is decreasing and concave, so $H'(t)$ is negative and decreasing, and function $G$ is increasing and convex and $s$ decreasing, so $G'(s(t))$ is positive and decreasing. Combining these we get that the ratio $H'(t)/G'(s(t))$ is decreasing, proving that $s$ is concave. Finally, notice that since we are in a two item i.i.d. setting, the only part of curve $x_2=s(x_1)$ that matters and may appear in the utility of the resulting mechanism~\eqref{eq:optimal_auction_gen} is the one where $x_1\leq x_2$ (curves $x_2=s(x_1)$ and $x_1=s(x_2)$ will intersect on the line $x_1=x_2$), so we only have to show that $s'(t)>-1$ for $t\leq s(t)$. Indeed, in that case $G'(t)\leq G'(s(t))$, so $s'(t)=H'(t)/G'(s(t))\geq H'(t)/G'(t)$  and thus it is enough to show that $H'(t)-G'(t)\geq 0$ which we know holds since $H+G$ is assumed to be increasing.

\end{proof}

\begin{corollary}[Monomial Distributions]
\label{th:optimal_two_power}
The optimal selling mechanism for two items with i.i.d.\ values from the family of distributions with densities $f(t)=(c+1)t^c$, $c\geq 0$, is deterministic. In particular, it offers each item for a price of $s=\sqrt[c+1]{\frac{c+2}{2c+3}}$ and the full bundle for a price of $p=s+x^*$, where $x^*$ is the solution to~\eqref{eq:price_bundle_eq_iid}.
\end{corollary}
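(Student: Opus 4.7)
The strategy is to apply the i.i.d.\ characterization of Theorem~\ref{th:characterization_iid} to the density $f(t)=(c+1)t^c$. Direct substitution gives
\[
F(t)=t^{c+1},\qquad G(t)=\frac{tf(t)}{1-F(t)}=\frac{(c+1)\,t^{c+1}}{1-t^{c+1}},\qquad H(t)=\frac{tf'(t)}{f(t)}=c.
\]
The key observation is that $H$ is \emph{constant}, hence trivially decreasing and concave, which will make the closed form come out very cleanly.

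To invoke Theorem~\ref{th:characterization_iid} I still need $G$ to be strictly increasing and convex on $I$, $G+H$ increasing, and $G(1)\ge 2+H(0)$. Writing $G=\phi\circ y$ with $y(t)=t^{c+1}$ and $\phi(y)=(c+1)y/(1-y)$, both $y$ (using $c+1\ge 1$) and $\phi$ are positive, increasing and convex on their domains, so the composition is strictly increasing and convex on $I$; moreover $G(1)=+\infty$, so $G(1)\ge 2+c$ is automatic, and $G+H=G+c$ inherits strict monotonicity from $G$. Solving $G(s)=2+H(t)=c+2$ yields $(c+1)s^{c+1}=(c+2)(1-s^{c+1})$, i.e.\ the constant cutoff $s=\sqrt[c+1]{(c+2)/(2c+3)}$ announced in the corollary. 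Because $s$ does not depend on $t$, we have $s'\equiv 0$, so the randomization slope $-s'(\cdot)$ in the side regions $D_1,D_2$ of Sect.~\ref{sec:partition_optimal} vanishes and each side region reduces to selling the corresponding single item deterministically at posted price $s-x_{-j}s'(x_{-j})=s$. The full bundle price $p=s+x^\star$, with $x^\star$ the solution of~\eqref{eq:price_bundle_eq_iid}, then comes directly from Theorem~\ref{th:characterization_iid}, giving the advertised $4$-option deterministic menu.

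Two background checks remain. First, Assumption~\ref{assume:upwards_def} has to be verified, but by the remark following that assumption it suffices to check it inside $D_{1,2}$. Plugging in the monomial density one gets $h(\vecc x)=(2c+3)(c+1)^2 x_1^c x_2^c$ and $f(1)f(x_1)=(c+1)^2 x_1^c$, so the required inequality reduces to $(2c+3)x_2^c\ge 1$; since $x_2\ge s$ on $D_{1,2}$, a short manipulation of the identity $s^{c+1}=(c+2)/(2c+3)$ gives $s^c\ge 1/(2c+3)$, which closes this check. Second, the corollary implicitly asserts that we land in the non-degenerate branch of Theorem~\ref{th:characterization_iid}, i.e.\ that~\eqref{eq:two_iid_full_bundle_price} is strictly positive at $p=s$; a beta-integral evaluation collapses this quantity to $1-\Gamma(c+3)^2/\Gamma(2c+4)$, and since this ratio equals $2/3$ at $c=0$ and is decreasing in $c$ (because $\psi(c+3)\le\psi(2c+4)$ for $c\ge 0$, with $\psi$ the digamma function), it stays strictly below $1$ throughout. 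I expect this last non-degeneracy verification to be the main obstacle, since everything else is essentially routine substitution into the already-established Theorems~\ref{th:characterization_main} and~\ref{th:characterization_iid}; the convexity of $G$ and the vanishing of $s'$ are almost immediate consequences of the fact that $H$ degenerates to a constant for the monomial family.
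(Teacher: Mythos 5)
Your substitution into Theorem~\ref{th:characterization_iid} is correct, as is the beta-integral/digamma computation that shows \eqref{eq:two_iid_full_bundle_price} evaluates to $1-\Gamma(c+3)^2/\Gamma(2c+4)>0$ at $p=s(0)=s$, which rules out the full-bundling branch. However, your verification of Assumption~\ref{assume:upwards_def} contains a genuine error: you claim that $x_2\geq s$ on $D_{1,2}$, but this is false. In the i.i.d.\ case $x_1^*=x_2^*=x^*$ and $p=s+x^*$, so $D_{1,2}=\sset{\vecc x : x_1+x_2\geq s+x^*,\; x_1\geq x^*,\; x_2\geq x^*}$. In particular the boundary points $(s,x^*)$ and $(x^*,s)$ lie in the closure of $D_{1,2}$, and the infimum of $x_2$ over $D_{1,2}$ is $x^*$, not $s$. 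Since you are precisely in the branch where $x^*<s$ (i.e.\ where $D_1$ and $D_2$ are nonempty), showing $s^c\geq 1/(2c+3)$ does \emph{not} close the check.

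The condition you actually need is $x^*\geq\omega$ where $\omega=(2c+3)^{-1/c}$, so that $D_{1,2}\subseteq[\omega,1]^2$. Your check at $x^*=0$ (equivalently $p=s$) is strictly weaker: it only shows $x^*>0$. What the paper does instead is to plug $x^*=\omega$ into \eqref{eq:price_bundle_eq_iid} and verify that the resulting quantity is positive for all $c\geq 0$; by the monotonicity discussion at the start of Sect.~\ref{sec:optimality} this simultaneously yields $x^*>\omega>0$, giving both the Assumption~\ref{assume:upwards_def} verification on $D_{1,2}$ and the non-degeneracy in one step. Your proof as written does not establish $x^*\geq\omega$, so the appeal to Theorem~\ref{th:characterization_main} is not yet licensed. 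To repair it, replace the ``$x_2\geq s$'' step with the $x^*=\omega$ evaluation (your beta-integral technique should adapt, though the integral now runs over the slightly more complicated region of \eqref{eq:price_bundle_eq_iid} rather than the triangle below $x_1+x_2=p$).
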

\begin{proof}
For two monomial i.i.d.\ items with $f_1(t)=f_2(t)=(c+1)t^c$ we have $h(\vecc x)=(c+1)^2 (2 c+3) x_1^c x_2^c\geq 0$, thus  $h(\vecc x)-f_2(1)f_1(x_1)=(c+1)^2 x_1^c \left((2 c+3) x_2^c-1\right)$ which is nonnegative for all $x_2\geq \sqrt[c]{1/(2c+3)}\equiv \omega$. So, in order to make sure that Assumption~\ref{assume:upwards_def} is satisfied, it is enough to show that $x^*\geq\omega$ because then $D_{1,2}\subseteq [\omega,1]^2$. We'll soon show that this is indeed satisfied for all $c\geq 0$.

Applying Theorem~\ref{th:characterization_iid} we compute: $G(t)=(c+1)t^{c+1}/(1-t^{c+1})$ which is strictly increasing and convex in $I$ and $H(t)=c$ which is constant and thus decreasing and concave. Also, it is trivial to deduce that $G+H$ is increasing and $\lim_{t\to 1^{-}}G(t)=\infty>2+c=2+H(0)$. Then, it is valid to compute $G^{-1}(t)=\left(\frac{3+2c}{2+c}\right)^{-\frac{1}{1+c}}$ and thus $s(t)=\sqrt[c+1]{\frac{c+2}{2c+3}}$ which is constant. 

Regarding the computation of the full-bundle price $p$, condition~\eqref{eq:price_bundle_eq_iid} gives rise to quantity
$$
\int_{x^*}^{s}\int_{s+x^*-x_2}^1x_1^cx_2^c\,d\vecc x+\int_{s}^1\int_{x^*}^1x_1^cx_2^c\,d\vecc x- \frac{2}{c+1}(1-{x^*}^{c+1}),
$$
which by plugging-in $x^*=\omega$ and using the values of $s$ and $\omega$ (as functions of $c$) one can see that  it is positive for all $c\geq 0$. So, by the discussion in the beginning of Sect.~\ref{sec:optimality} it can be deduced that the solution to~\eqref{eq:price_bundle_eq_iid}  will be such that $x^*>\omega$. 
\end{proof}
Notice that for $c=0$ the setting of Corollary~\ref{th:optimal_two_power} reduces to a two uniformly distributed goods setting, and gives the well-known results of $s=2/3$ and $p=(4-\sqrt{2})/3$ (see e.g.~\citep{Manelli:2006vn}). For the linear distribution $f(t)=2t$, where $c=1$, we get $s=\sqrt{3/5}$ and $p\approx1.091$.

\begin{corollary}[Exponential Distributions]
\label{th:optimal_two_expo}
The optimal selling mechanism for two items with exponentially i.i.d.\ values over $[0,1]$, i.e.\ having densities $f(t)=\lambda e^{-\lambda t}/(1-e^{-\lambda})$, with $0<\lambda\leq 1$, is the one having $s(t)=\frac{1}{\lambda}\left[2-\lambda t-W\left(e^{2-\lambda-\lambda t}(2-\lambda t)\right)\right]$ and a price of
$p=x^*+s(x^*)$ for the full bundle, where $x^*$ is the solution to~\eqref{eq:price_bundle_eq_iid}. Here $W$ is Lambert's product logarithm function\footnote{Function $W$ can be defined as the solution to $W(t)e^{W(t)}=t$.}.
\end{corollary}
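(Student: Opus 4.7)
The plan is to apply Theorem~\ref{th:characterization_iid} directly, since we are in an i.i.d.\ setting. First I would compute the two auxiliary functions: for $f(t) = \lambda e^{-\lambda t}/(1-e^{-\lambda})$ we get immediately $H(t) = tf'(t)/f(t) = -\lambda t$, which is linear and therefore (for any $\lambda > 0$) strictly decreasing and concave. A short computation also gives
\[
G(t) = \frac{tf(t)}{1-F(t)} = \frac{\lambda t}{1 - e^{-\lambda(1-t)}}.
\]

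Next I would verify the remaining structural hypotheses of Theorem~\ref{th:characterization_iid}. For the boundary condition, $G(1) = \lim_{t \to 1^-} \lambda t/(1 - e^{-\lambda(1-t)}) = +\infty$ by a first-order expansion of $1 - e^{-\lambda(1-t)}$, so $G(1) \geq 2 = 2 + H(0)$ for every $\lambda \in (0,1]$. Monotonicity and convexity of $G$ on $I$, together with monotonicity of $G+H$ (equivalently $G'(t) \geq \lambda$ on $I$), follow from elementary calculus; the cleanest route is to write $G(t) = \lambda t \cdot \phi(\lambda(1-t))$ with $\phi(u) = 1/(1-e^{-u})$, differentiate, and use $\lambda \leq 1$ to sign the resulting expressions. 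I expect this tedious but otherwise routine computation to be the main technical obstacle. I would also verify Assumption~\ref{assume:upwards_def}: since $h(\vecc x) = f_1(x_1)f_2(x_2)(3 - \lambda x_1 - \lambda x_2)$, one has
\[
h(\vecc x) - f_2(1)f_1(x_1) = f_1(x_1)\bigl[f_2(x_2)(3 - \lambda x_1 - \lambda x_2) - f_2(1)\bigr],
\]
and because $\lambda \leq 1$ gives $3 - \lambda(x_1 + x_2) \geq 1 \geq e^{-\lambda(1-x_2)} = f_2(1)/f_2(x_2)$, this is nonnegative on all of $I^2$. The second inequality in Assumption~\ref{assume:upwards_def} follows by symmetry.

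With the hypotheses in hand, Theorem~\ref{th:characterization_iid} yields $s(t) = G^{-1}(2 + H(t)) = G^{-1}(2 - \lambda t)$, which must now be put in closed form. Setting $y = 2 - \lambda t - \lambda s$ and rewriting the defining equation $\lambda s = (2-\lambda t)(1 - e^{-\lambda(1-s)})$, I get $y = (2-\lambda t) e^{-\lambda(1-s)}$; multiplying by $e^{\lambda(1-s)}$ and using $\lambda(1-s) = y + \lambda - 2 + \lambda t$ yields $y e^y = (2-\lambda t) e^{2 - \lambda - \lambda t}$. By definition of Lambert's $W$ this gives $y = W\bigl((2 - \lambda t) e^{2 - \lambda - \lambda t}\bigr)$, hence the stated formula $\lambda s(t) = 2 - \lambda t - W\bigl((2-\lambda t)e^{2 - \lambda - \lambda t}\bigr)$.

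Finally, the full-bundle price comes straight from Theorem~\ref{th:characterization_iid}: it is enough to check that expression~\eqref{eq:two_iid_full_bundle_price} evaluated at $p = s(0)$ is positive, so that the deterministic-bundling branch of the theorem does not apply and $p = x^* + s(x^*)$ with $x^*$ solving~\eqref{eq:price_bundle_eq_iid} is well defined. This reduces to checking an explicit (though unwieldy) expression in $\lambda$ over $(0,1]$, which is the reason the corollary does not attempt a closed-form for $p$ itself.
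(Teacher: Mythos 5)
Your proposal is correct and follows essentially the same route as the paper: verify Assumption~\ref{assume:upwards_def} for the exponential density, compute $G(t)=\lambda t/(1-e^{-\lambda(1-t)})$ and $H(t)=-\lambda t$, check the hypotheses of Theorem~\ref{th:characterization_iid}, and invert $G$ via Lambert's $W$ to obtain the stated $s(t)$. Your Assumption~\ref{assume:upwards_def} check ($3-\lambda(x_1+x_2)\geq 1\geq e^{-\lambda(1-x_2)}=f_2(1)/f_2(x_2)$) is if anything cleaner than the paper's, and the explicit derivation of the Lambert-$W$ identity matches the paper's asserted form of $G^{-1}$.
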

\begin{proof}
For two i.i.d. exponentially distributed items with $f_1(t)=f_2(t)=\lambda e^{-\lambda t}/(1-e^{-\lambda})$ we have 
$$
\hspace{-0.5cm}
h(\vecc x)-f_2(1)f_1(x_1)=\frac{\lambda^2}{\left(e^\lambda-1\right)^2} e^{2-\lambda (x_1+x_2)} (3-\lambda (x_1+x_2)-e^{\lambda x_2})\geq \frac{\lambda^2}{\left(e^\lambda-1\right)^2} e^{2-\lambda (x_1+x_2)} (2-\lambda (x_1+x_2))\geq 0
$$ 
for all $x_1,x_2\in I$, since $\lambda\leq 1$.

Applying Theorem~\ref{th:characterization_iid} we compute: $G(t)=\lambda t/(1-e^{-\lambda(1-t)})$ which is strictly increasing and convex in $I$ and $H(t)=-\lambda t$ which is decreasing and concave. Also, $G(t)+H(t)=\lambda te^{-\lambda(1-t)}/(1-e^{-\lambda(1-t)})$ is increasing and $\lim_{t\to 1^{-}}G(t)=\infty>2=2+H(0)$. Then, it is valid to compute $G^{-1}(t)=t/\lambda-W\left(t e^{t-\lambda}\right)/\lambda$ and thus $s(t)=\frac{1}{\lambda}\left[2-\lambda t-W\left(e^{2-\lambda-\lambda t}(2-\lambda t)\right)\right]$.
\end{proof}

For example, for $\lambda=1$ we get $s(t)=2-t-W\left(e^{1-t} (2-t)\right)$ and $p\approx 0.714$. 
Interestingly, to our knowledge this is the first example for an i.i.d.\ setting with values coming from a regular, continuous distribution over an interval $[0,b]$, where an optimal selling mechanism is \emph{not} deterministic. Also notice how this case of exponential i.i.d.\ items on a bounded interval is different from the one on $[0,\infty)$: by \citep{Daskalakis:2013vn,g2014} we know that at the unbounded case the optimal selling mechanism for two exponential i.i.d.\ items is simply the deterministic full-bundling, but in our case of the bounded $I$ this is not the case any more.

\section{Non-Identical Items}
\label{sec:non-iid}

An interesting aspect of the technique of Theorem~\ref{th:characterization_iid} is that it can readily be used also for non identically distributed values. One just has to define $G_j(t)\equiv tf_j(t)/(1-F_j(t))$ and $H_j(t)=tf_j'(t)/f_j(t)$ for both items $j=1,2$ and check again whether $G_1,G_2$ are strictly increasing and convex and $H_1,H_2$ nonnegative, decreasing and concave. Then, we can get $s_j(t)=G_j^{-1}(2+H_{-j}(t))$ and check if $s_j(1)> -1$ and the price $p$ of the full bundle can be given by~\eqref{eq:2slice_gen}. Again, a quick check of whether full bundling is optimal is to see if for $p=\min\sset{s_1(0),s_2(0)}$ expression $\int_0^1\int_0^1h(\vecc x)\,d\vecc x-\int_0^p\int_0^{p-x_2}h(\vecc x)\,d\vecc x-f_1(1)-f_2(1)$ is nonpositive.

\begin{example}
\label{example:uniform-expo}
Consider two independent items, one having uniform valuation $f_1(t)=1$ and one exponential $f_2(t)=e^{-t}/(1-e^{-1})$. Then we get that $s_1(t)=(2-t)/(3-t)$, $s_2(t)=2-W(2e)\approx 0.625$ and $p \approx0.787$. The optimal selling mechanism offers either only item $2$ for a price of $s_2\approx 0.625$, or item $1$ deterministically and item $2$ with a probability $s_1'(x_2)$ for a price of $s_1(x_2)-x_2s_1'(x_2)$, or the full bundle for a price of $p\approx 0.787$. You can see the allocation space of this mechanism in Fig.~\ref{fig:Exp_Uniform}.
\end{example}

\section{Approximate Solutions}
\label{sec:approximate_convex_fail}
In the previous sections we developed tools that, under certain assumptions, can give a complete closed-form description of the optimal selling mechanism. However, remember that the initial primal-dual formulation upon which our analysis was based, assumes a relaxed optimization problem. Namely, we dropped the convexity assumption of the utility function $u$. In the results of the previous sections this comes for free: the optimal solution to the relaxed program turns out to be convex anyways, as a result of the requirements of Theorem~\ref{th:characterization_main}. But what happens if that was not the case? The following tool shows that even in that case our results are still applicable and very useful for both finding good upper bounds on the optimal revenue (Theorem~\ref{th:two_iid_approx}) as well as designing almost-optimal mechanisms that have provably very good performance guarantees (Sect.~\ref{sec:convexification}).
\begin{theorem}
\label{th:two_iid_approx}
Assume that all conditions of Theorem~\ref{th:characterization_main} are satisfied, except from the concavity of functions $s_1,s_2$. Then, the function $u$ given by that theorem might not be convex any more and thus not a valid utility function, but it generates an \emph{upper bound} to the optimal revenue, i.e.\ $\rev(f_1,f_2)\leq\mathcal R_{f_1,f_2}(u)$. In particular, this is the case if all the requirements of Theorem~\ref{th:characterization_iid} hold except the concavity of $H$.
\end{theorem}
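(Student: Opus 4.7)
The plan is to observe that the entire machinery behind Theorem~\ref{th:characterization_main}, culminating in Lemma~\ref{lemma:coloring}, uses the concavity of $s_1,s_2$ at exactly one place: in asserting that the function $u$ from~\eqref{eq:optimal_auction_gen} is convex, and hence a valid (IC) utility function. The existence of the bundle price $p$ via the intermediate-value argument at the start of Section~\ref{sec:optimality}, the partition of $I^2$ into $D_1,D_2,D_{1,2}$, the deficiency Lemma~\ref{lemma:no_positive_def}, and the max-flow min-cut dual construction in Lemma~\ref{lemma:coloring} all rely only on $s_j$ being decreasing with $s_j'(\cdot)>-1$, on the integral identity~\eqref{eq:1slice_gen_integrals}, and on Assumption~\ref{assume:upwards_def}. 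None of these uses concavity of $s_j$.

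Given this, I proceed as follows. Without concavity the function $u$ of~\eqref{eq:optimal_auction_gen} may no longer be convex, but it is still continuous, nonnegative, and has partial derivatives in $[0,1]$ almost everywhere (each of the four pieces is affine with slope $1$ in one coordinate and slope $0$ or $-s_j'(\cdot)\in(0,1)$ in the other), so it lies in the feasible region of the \emph{relaxed} primal program from Section~\ref{sec:duality}. Applying Lemma~\ref{lemma:coloring} verbatim yields, for each $\varepsilon>0$, a feasible dual pair $(z_1^{\varepsilon},z_2^{\varepsilon})$ that is $\varepsilon$-complementary with $u$. By Lemma~\ref{lemma:complementarity}, the relaxed primal value $\mathcal R_{f_1,f_2}(u)$ differs from the dual value $\Psi^{\varepsilon}=\int_{I^2}(z_1^{\varepsilon}+z_2^{\varepsilon})\,d\vecc x$ by at most $7\varepsilon$. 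Weak duality then gives $\mathcal R_{f_1,f_2}(\tilde u)\leq \Psi^{\varepsilon}\leq \mathcal R_{f_1,f_2}(u)+7\varepsilon$ for every convex (hence IC) utility function $\tilde u$. Sending $\varepsilon\to 0$ and taking the supremum over $\tilde u$ delivers $\rev(f_1,f_2)\leq \mathcal R_{f_1,f_2}(u)$, as desired.

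For the final assertion of the theorem, one reads off the proof of Theorem~\ref{th:characterization_iid}: the concavity of $H$ was invoked only to conclude concavity of $s=G^{-1}(2+H)$, while the monotonicity of $s$ and the slope bound $s'(t)>-1$ were derived from the remaining hypotheses (convexity of $G$, $H$ decreasing, $G+H$ increasing). Thus dropping concavity of $H$ places one precisely into the setting of the preceding argument. The place that most warrants double-checking is Lemma~\ref{lemma:no_positive_def}, since it contains the geometric heart of the proof; rereading it confirms that only the monotonicity of the $s_j$'s and the comparison of their slopes to $-1$ (so that each $s_j$-curve stays on the ``correct'' side of the bundle line within $D_{1,2}$) are invoked, and concavity is nowhere used, so the hardest step of the original argument still applies unchanged.
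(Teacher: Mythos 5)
Your proof is correct and follows the same route as the paper's: both observe that the dual construction in Lemma~\ref{lemma:coloring} (and hence Lemma~\ref{lemma:no_positive_def}) never uses concavity of $s_1,s_2$, so $u$ from~\eqref{eq:optimal_auction_gen} still maximizes $\mathcal R_{f_1,f_2}$ over the \emph{relaxed} primal feasible set, and since $\rev(f_1,f_2)$ is the supremum over the smaller set of convex utilities, the inequality follows. You spell out the weak-duality-plus-$\varepsilon$-complementarity chain more explicitly than the paper's terse one-line appeal, but the argument is the same.
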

\begin{proof}
The proof is a straightforward result of the duality framework (see Sect.~\ref{sec:duality}): By dropping only the concavity requirement of functions $s_1$ and $s_2$ but satisfying all the remaining conditions of Theorem~\ref{th:characterization_main}, we still construct an optimal solution to the pair of primal-dual programs, meaning that function $u$ produced in \eqref{eq:optimal_auction_gen} maximizes $\mathcal R_{f_1,f_2}(u)$ over the space of all functions $u: I^2\map \R_+$ with partial derivatives in $[0,1]$ (see~\eqref{eq:allocs_probs_01}); the only difference is that $u$ might not be convex since $s_1,s_2$ might not be concave any more. The actual optimal revenue objective $\rev(f_1,f_2)$ has the extra constraint of $u$ being convex, thus, given that it is a maximization problem, it has to be that $\rev(f_1,f_2)\leq \mathcal R_{f_1,f_2}(u)$. Finally, it is easy to verify in the proof of Theorem~\ref{th:characterization_iid} that dropping just the concavity requirement for $H$ can only affect the concavity of functions $s_1,s_2$ and hence the convexity of $u$.
\end{proof}

\begin{example}[Power-Law Distributions]
\label{example:power-law}
 A class of important distributions that falls into the description of Theorem~\ref{th:two_iid_approx} are the power-law distributions with parameters $0<\alpha\leq 2$. More specifically, these are the distributions having densities $f(t)=c/(t+1)^\alpha$, with the normalization factor $c$ selected so that $\int_0^1f(t)\,dt=1$, i.e.~$c=(a-1)/(1-2^{1-\alpha})$. It is not difficult to verify that these distributions satisfy Assumption~\ref{assume:upwards_def}. For example, for $\alpha=2$ one gets $f(x)=2/(x+1)^2$, the \emph{equal revenue} distribution shifted in the unit interval. For this we can compute via~\eqref{th:two_iid_approx} that $s(t)=\frac{1}{2} \sqrt{5+2 t+t^2}-\frac{1}{2} (1+t)$ and $p\approx 0.665$, which gives an upper bound of $R_{f,f}(u)\approx 0.383$ to the optimal revenue $\rev(f,f)$.
\end{example}

\subsection{Convexification}
\label{sec:convexification}
The approximation results described in Theorem~\ref{th:two_iid_approx} can be used not only for giving upper bounds on the optimal revenue, but also as a \emph{design} technique for good selling mechanisms. Since the only deviation from a feasible utility function is the fact that function $s$ is not concave (and thus $u$ is not convex), why don't we try to ``convexify'' $u$, by replacing $s$ by a concave function $\tilde s$? If $\tilde s$ is ``close enough'' to the original $s$, by the previous discussion this would also result in good approximation ratios for the new, feasible selling mechanism. 

Let's demonstrate this by an example, using the equal revenue distribution $f(t)=2/(t+1)^2$ of the previous example. We need to replace $s$ with a concave $\tilde s$ in the interval $[0,x^*]$. So let's choose $\tilde s$ to be the concave hull of $s$, i.e.~the minimum concave function that dominates $s$. Since $s$ is convex, this is simply the line that connects the two ends of the graph of $s$ in $[0,x^*]$, that is, the line
$$
\tilde s(t)=\frac{s(0)-s(x^*)}{x^*}(x^*-t)+s(x^*).
$$
A calculation shows that this new \emph{valid} mechanism has an expected revenue which is within a factor of just $1+3\times 10^{-9}$ of the upper bound given by $s$ using Theorem~\ref{th:two_iid_approx}, rendering it essentially optimal.

\paragraph{Acknowledgements:} We thank Anna Karlin, Amos Fiat, Costis Daskalakis and Ian Kash for insightful discussions. We also thank the anonymous reviewers for their useful comments on the conference version of this paper.

\bibliographystyle{abbrvnat} 
\bibliography{TwoItems}

\appendix

\section{Remaining Proof of Lemma~\ref{lemma:coloring}}
\label{append:technical_flow_rest}
Functions $f_2$ and $\int_0^1w_1(x_1,\tilde x_2)\,dx_1$ are continuous in the interval $B^2_{\tilde j}$, so by the Mean Value Theorem there exist $\bar x_2,\bar{\bar x}_2\in B^2_{\tilde j}$ such that
\begin{equation}
\label{eq:mean_value_theorem_flow}
\int_0^1w_1(x_1,\bar x_2)\,dx_1=\frac{1}{\delta}\int_{B^2_{\tilde j}}\int_0^1w_1(\vecc x)\,d\vecc x=\frac{1}{\delta}f_1(1)\int_{B^2_{\tilde j}}f_2(x_2)\,dx_2=f_1(1)f_2(\bar{\bar x}_2)
\end{equation}
Notice that both $\bar x_2$ and $\bar{\bar x}_2$ are $\delta$-close to $\tilde x_2$. Function $f_2$ is uniformly continuous, so one can pick $\delta$ small enough in order to 
\begin{equation}
\label{eq:uniform_cont_1}
f_1(1)f_2(\bar{\bar x}_2)-f_1(1)f_2(\tilde x_2)\leq \varepsilon'/2.
\end{equation}
In the same way, because $h$ is uniformly continuous, we can select $\delta$ small enough so that $h(x_1,\bar x_2)-h(x_1,\tilde x_2)\leq \varepsilon'/3$ for all $x_1\in I$, and that would give
\begin{align}
\left | \int_0^1w_1(x_1,\bar x_2)\,dx_1-\int_0^1w_1(x_1,\tilde x_2)\,dx_1\right | &\leq \sum_i\frac{f_1(i,j)}{f(i,j)}\int_{B^1_i}\left | h(x_1,\bar x_2)- h(x_1,\tilde x_2)\right |\,dx_1 \notag\\
 &+\left | \bar x_2-\tilde x_2\right |\left\| h\right\|_\infty \notag \\
&\leq \sum_i\int_{B^1_i}\left | h(x_1,\bar x_2)- h(x_1,\tilde x_2)\right |\,dx_1 +\delta\left\| h\right\|_\infty \notag \\
&\leq \int_0^1\frac{\varepsilon'}{3}\,dx_1 +\delta\left\| h\right\|_\infty \notag \\
&\leq \varepsilon'/2, \label{eq:uniform_cont_2}
\end{align}
for choosing a small enough value for $\delta$, since $\left\| h\right\|_\infty\equiv\sup_{\vecc x\in I^2} h(\vecc x)$ is a fixed constant (because $h$ is continuous). The last additive term in the first inequality accounts for the fact that the length of the intersections of horizontal lines $x_2=\bar x_2$ and $x_2=\tilde x_2$ with $D_{1,2}$ may differ by $\left | \bar x_2 - \tilde x_2\right|$ (remember that the boundary of $D_{1,2}$ is a $45^\circ$--line).

Finally, by plugging in inequalities~\eqref{eq:uniform_cont_1} and~\eqref{eq:uniform_cont_2} into~\eqref{eq:mean_value_theorem_flow} we get the desired
$$
\left | \int_0^1w_1(x_1,\tilde x_2)\,dx_1 -f_1(1)f_2(\tilde x_2)\right |\leq \varepsilon '.
$$
\end{document}